\documentclass[floatfix,nofootinbib,superscriptaddress,twocolumn,pre,showpacs,showkeys,preprintnumbers]{revtex4-2}

\usepackage{latexsym,amsmath,amsfonts,amsthm,amssymb,bbm}
\usepackage{epsfig,graphics,color,calc,graphicx,pict2e}
\usepackage{comment} 
\usepackage{multirow}
\usepackage{booktabs} 
\usepackage{footmisc} 
\usepackage[T1]{fontenc}
\usepackage[utf8]{inputenc}
\usepackage{float} 

\usepackage{mathrsfs}
\usepackage{tikz}
\usetikzlibrary{shapes,arrows}
\usetikzlibrary{intersections,matrix,positioning}

\usepackage[printonlyused,withpage]{acronym}
\usepackage[acronym]{glossaries}

\usepackage{subfigure} 
\newtheorem{theorem}{Theorem}

\newtheorem{lemma}{Lemma}
\newtheorem{remark}{Remark}

\newlength{\pecettawidth}
\newcommand{\bbZ}{\mathbb{Z}}
\newcommand{\xor}{\oplus}

\newacronym{pca}{PCA}{Probabilistic Cellular Automata}
\newacronym{ca}{CA}{Cellular Automata}
\newacronym{eca}{ECA}{Elementary Cellular Automata}
\newacronym{deca}{DECA}{Diploid Elementary Cellular Automata}
\newacronym{ndeca}{NDECA}{Null Diploid Elementary Cellular Automata}

\begin{document}
\title{\normalsize\Large\bfseries  Inefficiency of  the block approximation in diploid  Probabilistic Cellular Automata}


\author{Emilio N.M. Cirillo\thanks{emilio.cirillo@uniroma1.it}}
\affiliation{Dipartimento di Scienze di Base e Applicate per l'Ingegneria, 
             Sapienza Universit\`a di Roma, 
             via A.\ Scarpa 16, I--00161, Roma, Italy.}
\author{Joram L. Vliem\thanks{J.l.Vliem@students.uu.nl}}
\affiliation{Institute of Mathematics,
University of Utrecht, Budapestlaan 6, 3584 CD Utrecht, The~Netherlands.}
\affiliation{Institute for Theoretical Physics Utrecht University, Princetonplein 5, 3584 CC Utrecht, The Netherlands}
\author{Dirk Schuricht\thanks{D.Schuricht@uu.nl}}
\affiliation{Institute for Theoretical Physics Utrecht University, Princetonplein 5, 3584 CC Utrecht, The Netherlands}
\author{Cristian Spitoni\thanks{C.Spitoni@uu.nl}}
\affiliation{Institute of Mathematics,
University of Utrecht, Budapestlaan 6, 3584 CD Utrecht, The~Netherlands.}


\begin{abstract}
We study a probabilistic cellular automaton obtained as a mixture of the additive elementary rules 60 and 102. We prove that, for any finite periodic lattice and for mixing parameter $\lambda=1/2$, the system almost surely reaches the absorbing all-zero configuration in finitely many steps. In addition, Monte Carlo simulations indicate as well the presence of a zero-density stationary state in a finite interval around $\lambda=1/2$.
Despite this absorbing behavior, both mean-field and block approximation schemes predict a stationary state with non-zero density. This failure, traced to the additive and mirror symmetries of the deterministic components, highlights a fundamental limitation of finite-block approximation in capturing the global dynamics of probabilistic cellular automata. 
\end{abstract}

\maketitle









\section{Introduction}
\label{s:intro} 
\gls{pca} are a stochastic 
generalization of deterministic \gls{ca}. They are defined as 
discrete-time Markov chains whose state space is the product of a finite single-site state space.
The cell variables 
are simultaneously updated according to 
a single-cell
probability distribution which depends on the 
neighboring cells configuration at the previous time.
Despite their simplicity,
\gls{pca} exhibit an ample variety of dynamical behaviors and are thus
considered powerful and useful modelling tools.
We refer the interested reader to 
\cite{LN2016} for a comprehensive introduction to the subject and to 
\cite{VRD2023,D2020} for recent applications.

The focus of this paper is on the stationary behavior 
of a particular 
class of \gls{pca} which are a probabilistic mixture of 
\gls{eca} 
that is to say, 
one-dimensional \gls{ca} in which 
the state of a cell, either zero or one,
is chosen
depending on its previous state 
and on that 
of its two neighboring 
cells 
\cite{W83,W84}. 
Binary
probabilistic 
mixtures are called \emph{diploid} and have been widely studied 
in the literature 
\cite{Fautomata2017,BMM2013,T2004,MM2014,Me2011,Dh83}
not only for their intrinsic
interest, but also because some of them encode the evolution 
rule of very well known paradigmatic \gls{pca} models.

In this work we study the diploid PCA obtained as a probabilistic mixture (with mixing parameter $\lambda$) of the
additive ECA rules 60 and 102  which are left-right reflections of one another (see Table~\ref{tab:rules_60_102}).
Although both rules are linear over $\mathbb{Z}_2$ and individually generate
Sierpiński-type patterns (see \cite{wolfram2003}), their mixture exhibits an interesting absorbing-state
phenomenon.  
Monte Carlo simulations indicate the presence of an \emph{inactive} or
\emph{zero-density}  stationary state in an interval of $\lambda$ values around
$\lambda=1/2$, separated by two symmetric transition points from a
non-zero density regime.  
 
This absorbing behavior stands in sharp contrast to what is predicted by common approximation methods.  
Both the Mean-Field (MF) and the Block Approximation (BA) with block sizes up to
$M=13$ produce positive stationary densities for all $\lambda\in(0,1)$, and
therefore fail to detect the zero-density stationary state suggested by the simulations and
rigorously proven at the symmetric point.  
A detailed analysis of the BA shows that the block-approximation density
$\rho_M(\lambda)$ decays exponentially with $M$, and an extrapolation to
infinite block size yields values compatible with zero near
$\lambda=1/2$.  
However, the effective block size required to reproduce this behavior is of
order $M\approx 38$, far beyond computationally feasible block sizes.  

This work has two main objectives. First, as discussed above, we show that the MF and BA schemes fail to capture the transition between the zero-density and positive-density states. Second, we rigorously establish the existence of the zero-density state at $\lambda = 1/2$. To this end, we develop a rigorous argument based on estimating the probability that the dynamics maps an arbitrary configuration to the all-zero configuration within a finite number of time steps. Our central result is that this probability is strictly positive. In addition, Monte Carlo simulations indicate that the zero-density phase extends to a finite interval around $\lambda=1/2$. 

The remainder of the paper is organized as follows.
Section~\ref{s:modello}  recalls the definition of diploid PCA and reviews
the BA scheme.  
Section~\ref{s:diploid} introduces the diploid mixture of rules 60 and
102, describes its additive properties, and presents numerical simulations.
Section~\ref{s:rigorous} contains the rigorous proof of absorption for 
$\lambda=1/2$, i.e., the all-zero configuration is almost surely reached in finite time.  
Section~\ref{s:failure} analyses in detail the behavior of MF and BA, including an exponential scaling study of the block-size dependence.  
Finally, Section~\ref{s:future} summarizes the results and outlines future directions.

\section{Mixture of ECA and block approximation}
\label{s:modello} 
We introduce in brief the probabilistic mixtures of 
\gls{eca} and the BA approach. 
We refer to \cite{CLS2024} for a more detailed discussion
and for several examples of applications. 

Call 
$Q=\{0,1\}$ 
the
\emph{set of states} 
and 
$\mathbb{Z}_n=\mathbb{Z}/n\mathbb{Z}=\{0,1,\dots,n-1\}$ 
the
$n$ cell
annulus,
with $n$ an integer.
The \emph{configuration space}
is 
$X_n=Q^{\mathbb{Z}_n}$.
For $x\in X_n$, 
$\Delta\subset\mathbb{Z}_n$,
the restriction of $x$ to $\Delta$
is $x_\Delta$.
We abuse notation by writing 
$x_i$ for $x_{\{i\}}$ and call it 
the \emph{value} of 
the cell $i$ or the \emph{occupation number} of the cell $i$. 

Given the map $F:X_n\to X_n$,
the \gls{ca} associated with $F$ 
is the 
collection of the sequences of configurations 
$\zeta^t$, with $t\in\mathbb{N}_0$ such that 
$\zeta^t=F(\zeta^{t-1})$.
The sequence $\zeta^t$ such that 
$\zeta^0=x_0\in X_n$ 
is the \emph{trajectory} of the \gls{ca} with \emph{initial condition} $x_0$. 

In 
\gls{eca}
cell states 
are updated according to the state 
of the 
\emph{neighborhood} 
$I_i=\{i-1,i,i+1\}$
of the cell
$i\in\mathbb{Z}_n$.
Note that, due to the periodicity of $\mathbb{Z}_n$, 
the neighborhood of the origin is 
$I_0=\{n-1,0,1\}$.
Thus, 
given a function 
$f:Q^3\to Q$, called
\emph{local rule}, 
the associated \gls{eca} is the \gls{ca} defined by the 
map
$F:X_n\to X_n$ such that
$(F(x))_i = f(x_{I_i}),$ for any $i\in\mathbb{Z}_n$ and $x\in X_n$.

The 
256 possible \gls{eca} 
\cite{wolfram2003}
are identified 
by the integer 
number 
\begin{align}
\label{fin000}
W
=&
\phantom{+}
f(1,1,1)\cdot2^7
+
f(1,1,0)\cdot2^6
+
f(1,0,1)\cdot2^5
\notag\\
&
+
f(1,0,0)\cdot2^4
+
f(0,1,1)\cdot2^3
+
f(0,1,0)\cdot2^2
\notag\\
&
+
f(0,0,1)\cdot2^1
+
f(0,0,0)\cdot2^0
\end{align}
belonging to $\{0,\dots,255\}$.
The coefficients of the above expansion provide
the binary representation of the number $W$ and are 
sometimes used to denote the \gls{eca}. 
Note that for the given  \gls{eca} $W$, the 
\textit{conjugate under left-right reflection} 
rule is obtained by exchanging $f(1,1,0)$ with $f(0,1,1)$ 
and $f(1,0,0)$ with $f(0,0,1)$. 

\gls{pca} are defined as Markov chains 
$\zeta^t$, with $t\in\mathbb{N}_0$, on $X_n$ 
with transition matrix 
from state $x$ to $y$ 
\begin{equation}
\label{mod000a}
p(x,y)
=
\prod_{i\in\mathbb{Z}_n}
p_i(y_i|x),
\end{equation}
with the \emph{one-site updating probability}
$p_i(\cdot|x)$ being 
a probability distribution on $Q$ parameterized 
by the configuration $x$. 
Thus, at each time all the cells are updated 
simultaneously and independently. 
We denote with $\mathbb{P}_x$ the probability associated with the chain
started at $x$ and, thus,
$\mathbb{P}_x(\zeta^t=y)$ 
(resp.~$\mathbb{P}_x(\zeta^t\in Y)$)
is the probability 
that the chain started at $x$ is at time $t$ 
in the configuration $y$ 
(resp.~in the set of 
configurations $Y\subset X_n$).

\gls{pca} can be defined as probabilistic
mixtures of \gls{eca} by choosing 
\begin{equation}
\label{mod000}
p_i(y_i|x)
=
y_i\phi(x_{I_i})
+
(1-y_i)[1-\phi(x_{I_i})],
\end{equation}
where
$\phi:Q^3\to[0,1]$, which can 
be interpreted as the probability to set the cell to one,
is defined as 
\begin{equation}
\label{mod010}
\phi=\sum_{r=1}^m \xi_r f_r,
\end{equation} 
with $m$ a positive integer,
$\xi_r\in[0,1]$ such that $\xi_1+\cdots+\xi_m=1$
and $f_1,\dots,f_m$ are $m$ local rules.
The following interpretation is in order:
at each time and for each
cell in $\mathbb{Z}_n$ one chooses at random one integer $r$ in $\{1,\dots,m\}$ 
with probabilities
$\xi_1,\dots,\xi_m$ and 
performs the updating with the rule $f_r$ based on the 
neighborhood configuration at time $t-1$.
Indeed, with this algorithm the
probability to set the cell to $1$ at time $t$ is
the sum of the $\xi_r$ such that $f_r$, 
computed in the neighborhood configuration at time $t-1$,
is one.

The case $m=2$ has been widely studied in the literature
\cite{paul2026,mendoncca2011,fates2008,CNS2021,Fautomata2017,CLS2024}: 
the probabilistic 
mixtures of \gls{eca} are called \emph{diploid} and 
\eqref{mod010} simplifies to 
\begin{equation}
\label{mod010bis}
\phi=(1-\lambda)f_1+\lambda f_2,
\end{equation} 
with $\lambda\in[0,1]$.

We will study the 
possibility for \gls{pca} to exhibit 
stationary behaviors. For a generic probability measure $\mu$ on $X_n$ we define the density as the
spatially averaged occupation number
\begin{equation}
\label{e:densityG}
\delta_\mu := \frac{1}{n}\sum_{i\in\bbZ_n}\mu(x_i=1). 
\end{equation}
If $\mu$ is translation invariant, then $\delta_\mu=\mu(x_0=1)$. We will call \emph{density} the quantity $\delta_\mu$. 
Moreover, we shall also use 
\begin{equation}
\label{e:densityMC}
\delta_x=\frac{1}{T}
\sum_{t=1}^T\frac{1}{n}\sum_{i\in \mathbb{Z}_n } \zeta_i^t
,
\end{equation}
as its empirical estimate in Monte Carlo simulations.
The estimate can be improved by starting to average from a suitable 
thermalization time and sampling the dynamics not at each time but
at times separated by a suitable decorrelation time.
We want to stress that in the numerical simulations we will always initialize the PCA from an i.i.d.\
Bernoulli product measure on $X_n$. Since the update rule is the same at
every site, the Markov evolution commutes with spatial translations; hence the
distribution of $\zeta^t$ is translation invariant for every $t\ge0$.
In particular, $\mathbb E[\zeta_i^t]$ does not depend on $i$, and the expected
density can be identified with $\mathbb E[\zeta_0^t]$.

Block approximations of the stationary measure have been shown to be reliable in several studies, see, 
e.g., 
\cite{CLS2024,FF15} and references therein. 
We shortly recall the main ideas, slightly modifying the notation 
with respect to the one adopted in \cite{CLS2024}:
let a \textit{block} be a sequence of contiguous cells and denote 
$B_{i,m}$ the $m$-cell block 
$\{i,i+1,\dots,i+m-1\}$, with $i\in\mathbb{Z}_n$ and 
$1\le m\le n-i$; recall $\mathbb{Z}_n$ is an annulus.
Note that the neighborhood $I_i$ defined above is nothing 
but the block $B_{i-1,3}$.
Thus, given $z$ a configuration on $B_{i,m}$ we write 
\begin{align}
\label{eq:it010}
P_x(\zeta_{B_{i,m}}^t & =z) 
\notag\\
=&
\sum_{y\in Q^{B_{i-1,m+2}}} 
 \prod_{k\in B_{i,m}}p_k(z_k|y)  
                     P_x(\zeta^{t-1}_{B_{i-1,m+2}}=y) 
.
\end{align}
Now, let $\varrho$ be the stationary 
distribution and abuse the notation by writing
\begin{equation}
\label{eq:it012}
\varrho(z)
=
\sum_{y\in X_n:\, y_{B_{i,m}}=z} \varrho(y)
\end{equation}
for any $z\in Q^{B_{i,m}}$.
Thus, by 
exploiting periodic boundary conditions and translation invariance,
from \eqref{eq:it010}
we get
\begin{equation}
\label{eq:it014}
\varrho(z) 
=
\sum_{y\in Q^{B_{0,m+2}}} 
 \prod_{k\in B_{1,m}} p_k(z_k|y)  \varrho(y) 
\end{equation}
for any $z\in Q^{B_{1,m}}$.
We remark that the left-hand side of the equation above contains the $2^m$ 
values of the stationary probability of the configurations on the 
$m$-block $B_{1,m}$.
Instead, the right-hand side is written 
in terms of the $2^{m+2}$ values of the stationary probability 
of the configurations on the $(m+2)$-block $B_{0,m+2}$.

The $M$-cell BA of the stationary measure 
consists in 
setting $M$ as the maximal block size and 
rewriting
\eqref{eq:it014}
for the measure on the $M$-cell block
with
the $M+2$-cell measures appearing on the right-hand side
approximated 
by combinations of probabilities associated with smaller blocks.

A very smart approximation is provided 
by the local structure scheme based on the so-called Bayes' rule 
for BA \cite{GVK87,GV87,FF15}, 
namely, 
\begin{equation}
\label{eq:it030}
\varrho(x_1x_2)\approx\varrho(x_1)\varrho(x_2),
\end{equation}
and, for $m\ge3$, 
\begin{equation}
\label{eq:it040}
\varrho(x_1\dots x_m)
\approx
\frac{\varrho(x_1\dots x_{m-1})\varrho(x_2\dots x_m)}
      {\varrho(x_2\dots x_{m-1})}
.
\end{equation}
In the case in which
the denominator vanishes, we assume $\varrho(x_1\dots x_m)=0$.

This approximation ensures that the consistency property 
of block measures is satisfied, that is to say, 
the probability of smaller blocks can be computed
by considering the marginal of larger blocks measure,
by integrating out either the left-hand or the right-hand cell variable, 
more precisely, 
\begin{equation*}
\varrho(x_1\dots x_m)
=
\sum_{y\in Q} \varrho(yx_1\dots x_m)
=
\sum_{y\in Q} \varrho(x_1\dots x_my)
\,,
\end{equation*}
with $x_1,\dots,x_m\in Q$.
This allows to solve
equations~\eqref{eq:it014}--\eqref{eq:it040} using
the recursive scheme proposed in \cite{K74} within the framework
of the cluster variation method and called the natural iteration method.
The cluster variation method has been widely applied in
the context of the statistical mechanics of lattice spin systems
and can be considered a generalization of the MF and Bethe
approximations of the stationary density matrix.
It has proven to be extremely powerful in detecting the
existence of phase transitions, predicting their order and
their location in the parameter space, and computing correlation functions;
see, e.g.,
\cite{P05,CGP1996,CGTM1999}.

More specifically, the initial $M$-block probabilities are chosen, 
and then, at each step of the iteration, the $(m-1)$-block probabilities 
are computed via symmetrized marginalization (half of the sum of 
the left and right marginals). 
Next, the equations~\eqref{eq:it014}--\eqref{eq:it040} are 
employed to compute the new $M$-block probabilities. 
This algorithm is repeated until a fixed point 
is reached. In general, we fix an \emph{a priori} 
precision of $10^{-8}$ for the $L^2$-norm of the $M$-block probabilities.

For more explicit expressions of the iterative equation in the cases 
$M=3,5,7,9$ we refer the interested reader to \cite{CLS2024}. Here 
we note that, from
\eqref{eq:it030} and \eqref{eq:it040} we get
\begin{equation}
\label{appMF}
\varrho(x_1x_2x_3)
\approx
\frac{\varrho(x_1x_2)\varrho(x_2x_3)}{\varrho(x_2)}
\approx
\varrho(x_1)\varrho(x_2)\varrho(x_3)
,
\end{equation}
so that the $1$-BA is nothing but the MF approximation. 

Several papers (e.g., \cite{fuks2012,FF15,CLS2024}) have exploited the remarkable ability 
of the BA to reproduce accurately the 
structure of the stationary measures of \gls{pca}.
For instance, the 
paper \cite{CLS2024}  approached this 
problem not only in the diploid case, but also for a mixture 
of three local rules resulting in a \gls{pca} 
which exhibits a double transition 
as a function of the mixing parameter. 
We also refer to the papers discussed in the introductory part of this note.

\section{A highly symmetric diploid}
\label{s:diploid} 
In this paper we are interested in the diploid obtained as a 
mixture of the \gls{eca} 60 and 102, and show that, due to 
its highly symmetric character, the BA, at 
least at the order that we have considered, is not able to capture 
the existence of the zero-density stationary state.
The  behavior of the diploid is 
described in Table~\ref{tab:rules_60_102}. 
\begin{table}[H]
    \centering
    \begin{tabular}{|l|cccccccc|}
    \hline
         & 111 & 110 & 101 & 100 & 011 & 010 & 001 & 000\\
        \hline
        \textbf{60} & 0 & 0 & 1 & 1 & 1 & 1 & 0 & 0\\
        \textbf{102} & 0 & 1 & 1 & 0 & 0 & 1 & 1 & 0 \\
        \hline
    \end{tabular}
    \caption{Deterministic dynamics of ECA 60 and 102.
    The rule is reported in the first column. The following columns 
provide the update of the central site of the neighborhood reported 
in the first row.}
    \label{tab:rules_60_102}
\end{table}
\noindent The two rules are conjugate to each other  under left-right reflection.
In Figure~\ref{f:fig-evol} the evolution of the two deterministic rules is shown:  both rules produce the \emph{Sierpiński triangle} patterns from a single active cell. 
\begin{figure}

\includegraphics[width=0.15\textwidth, height=5cm]{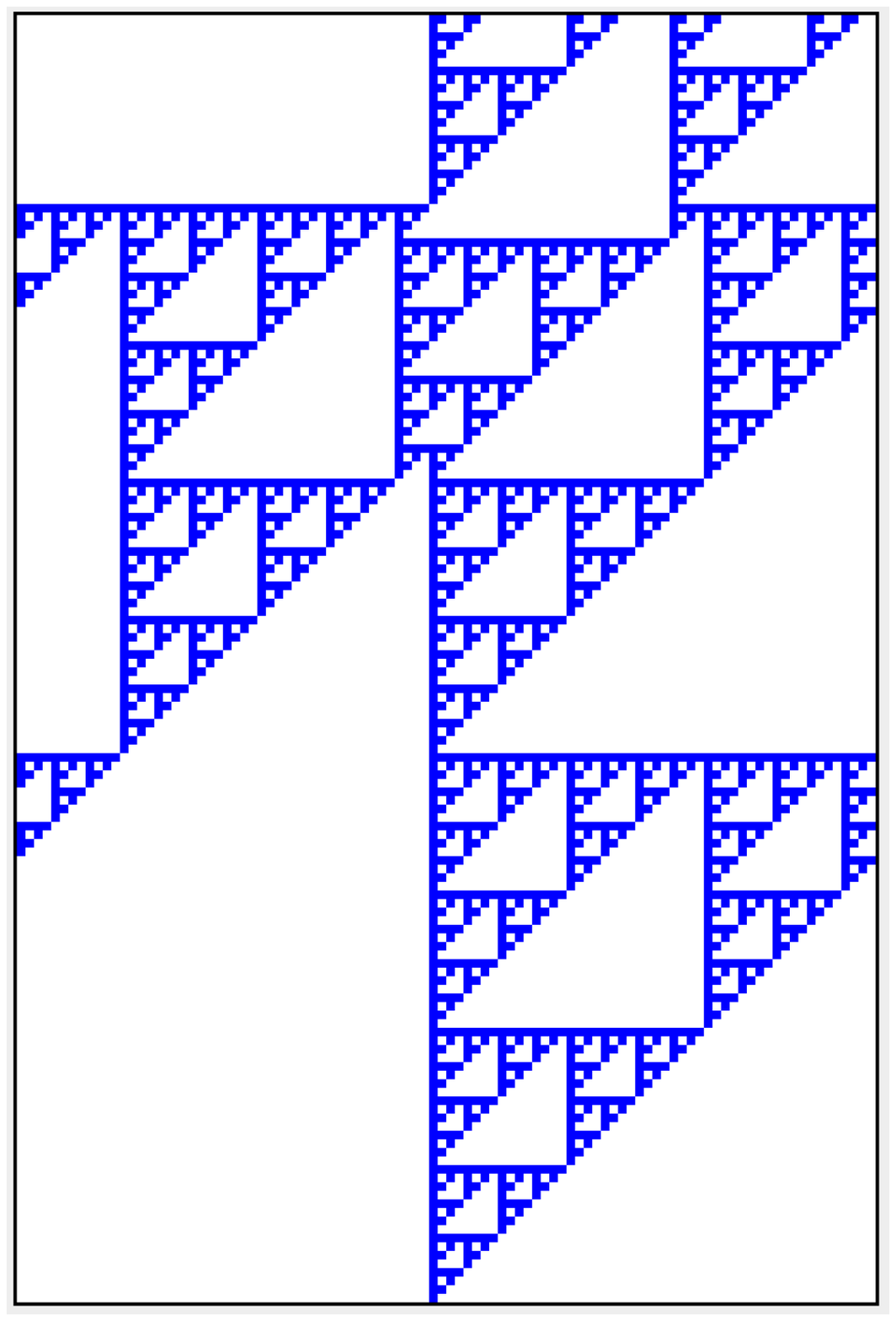}
\hskip 0.05 cm
\includegraphics[width=0.15\textwidth, height=5cm]{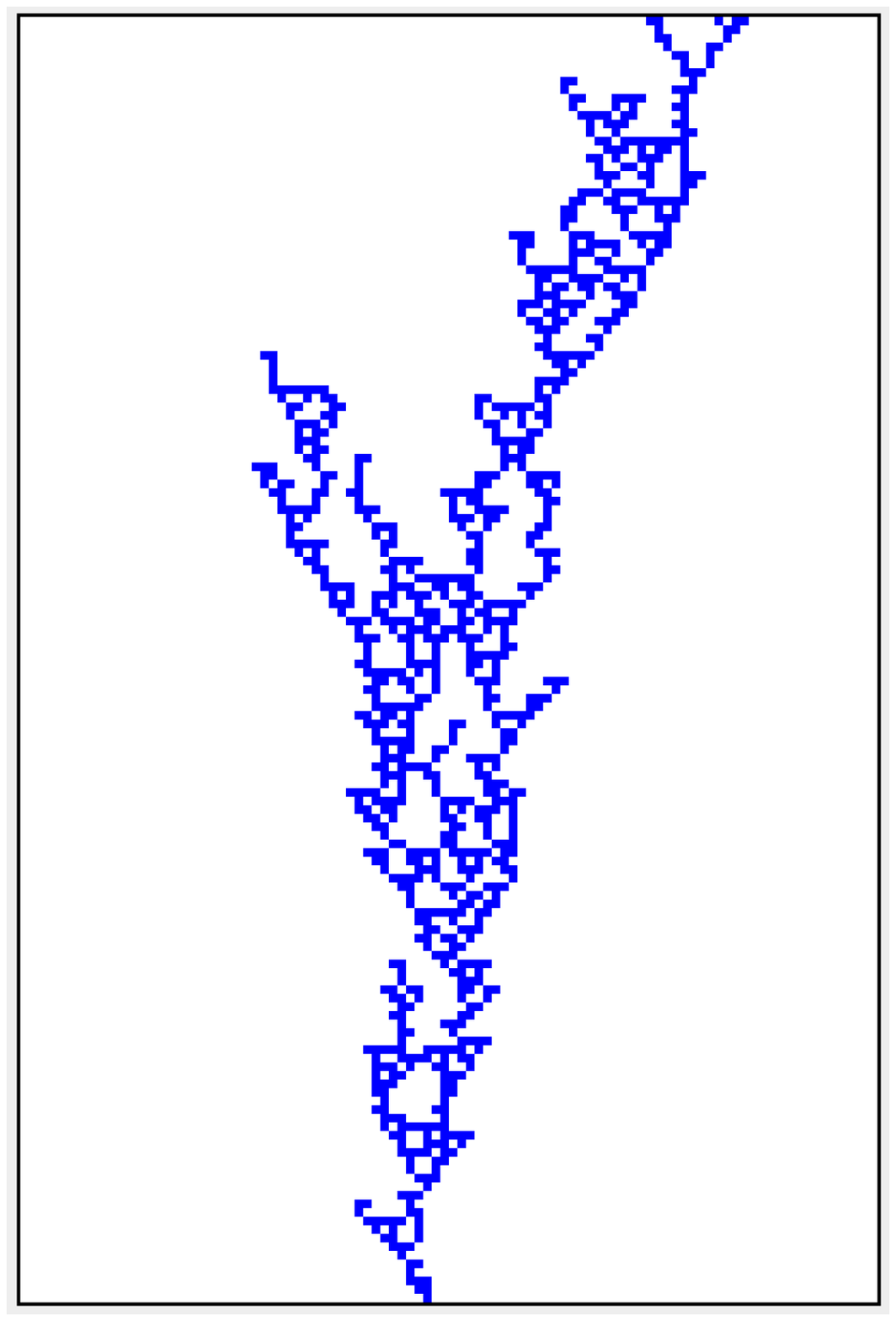}
\hskip 0.05 cm
\includegraphics[width=0.15\textwidth,height=5cm]{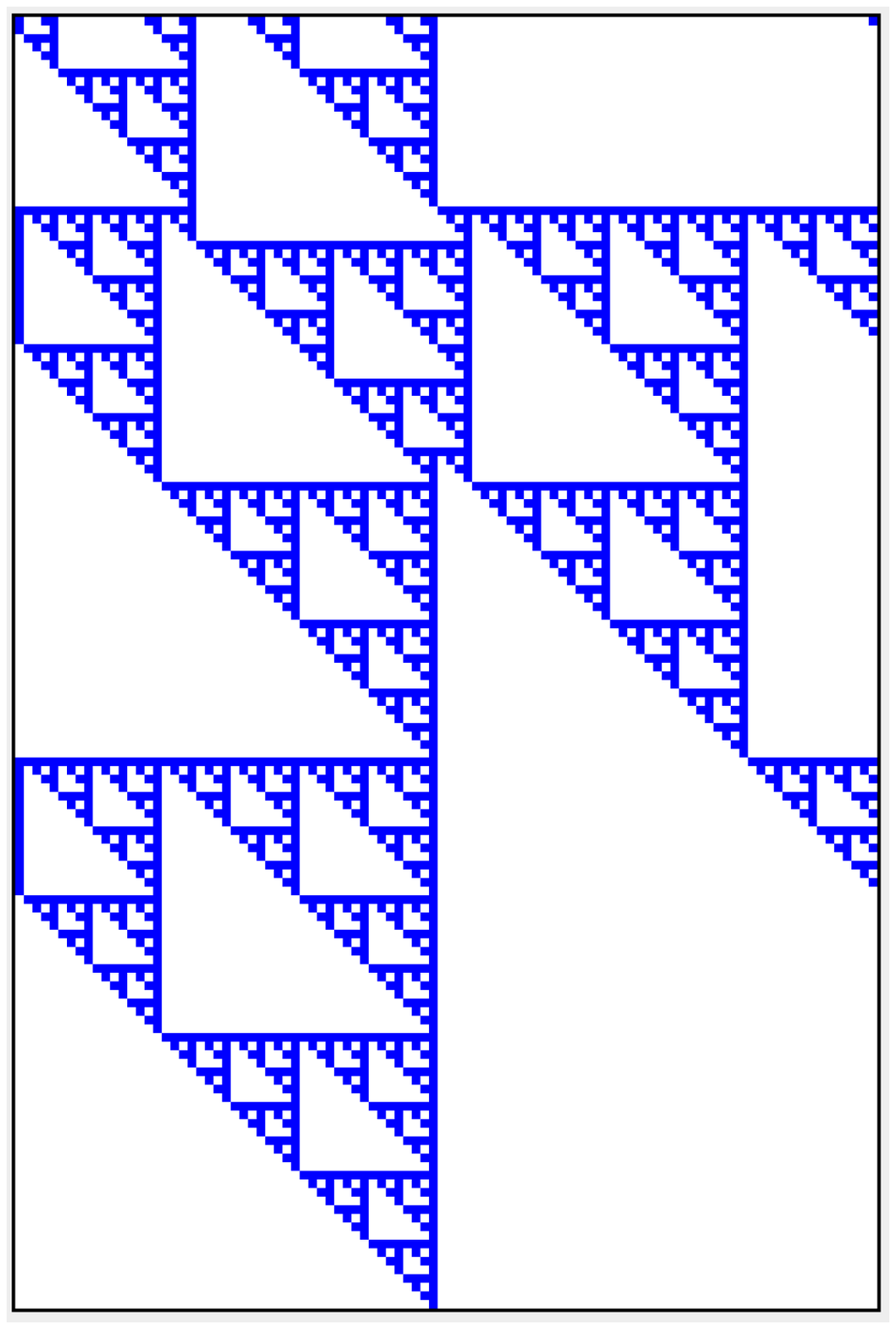}
\caption{The left and right panels report, respectively, the 
evolution up to time $150$ of the \gls{eca} 60 and 102 on a $100$-cell
periodic lattice; the central panel reports the diploid mixture with parameter $\lambda=1/2$. Unit squares represent cells in state one.
In the initial configuration the state of all the cells is zero, 
but for the one at the center of the lattice. In the plot the time runs upwards.
}
\label{f:fig-evol}
\end{figure}
\noindent Moreover, they correspond to well-known Boolean functions: ECA 60 implements the local rule $f_{1}(a,b,c)=a\oplus b$,
while ECA 102 implements $f_{2}(a,b,c)=b\oplus c$,
where $\xor$ denotes addition modulo 2 (XOR operator). This formal correspondence allows us to view these cellular automata as spatially distributed XOR functions. Furthermore, both rules are \emph{additive}: for any two configurations $x,y\in\{0,1\}^{\bbZ_n}$,
their sum (bit-wise XOR) evolves linearly:
$$
A(x\oplus y)=A x\oplus A y,
$$
where we denote by $A : X_n \to X_n$ the linear operator over $\mathbb{Z}_2$
associated for instance to the deterministic update of ECA~60, defined by:
\begin{equation}
\label{eq:def-A}
(Ax)_i := x_{i-1}\oplus x_i,
\qquad i\in\mathbb{Z}_n,
\end{equation}
where the indices are taken modulo $n$.
The linearity allows explicit computation of their space-time evolution and
leads to characteristic diagonal stripe patterns.
Moreover, on finite rings of odd length, both ECA $60$ and ECA ${102}$ are invertible, hence the corresponding dynamics are \emph{reversible}:
each configuration has a unique predecessor under the same rule.

In Figure~\ref{fig:dipsimo} we show the time evolution of the diploid with mixing parameter $\lambda=1/2$, and we observe that after $1080$ iterations the system reaches the absorbing state with zero density.
\begin{figure}
    \centering
    
    \subfigure[{$t\in[0,300]$}]{\includegraphics[width=0.22\textwidth]{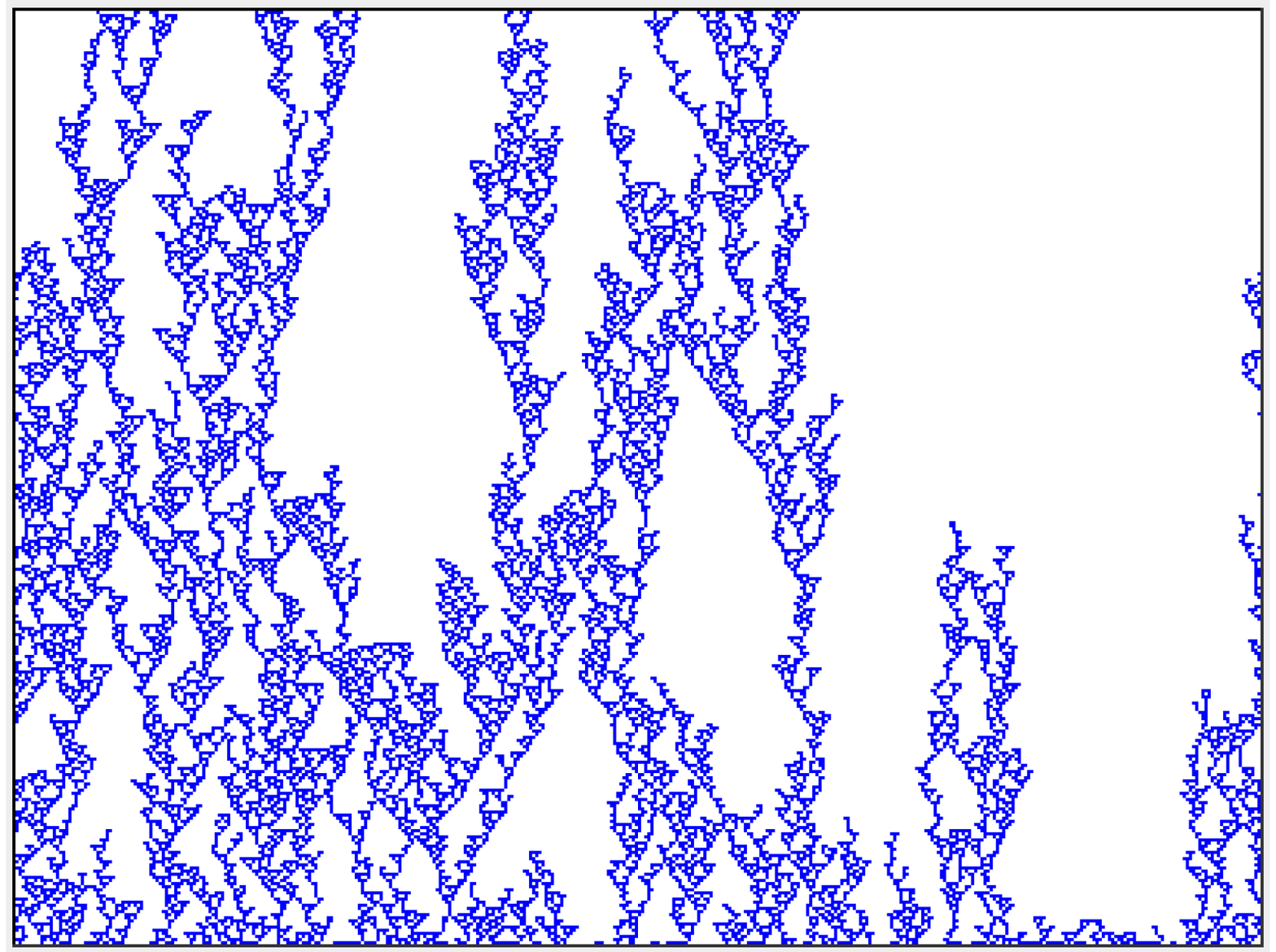}} 
    \subfigure[{$t\in[301,600]$}]{\includegraphics[width=0.22\textwidth]{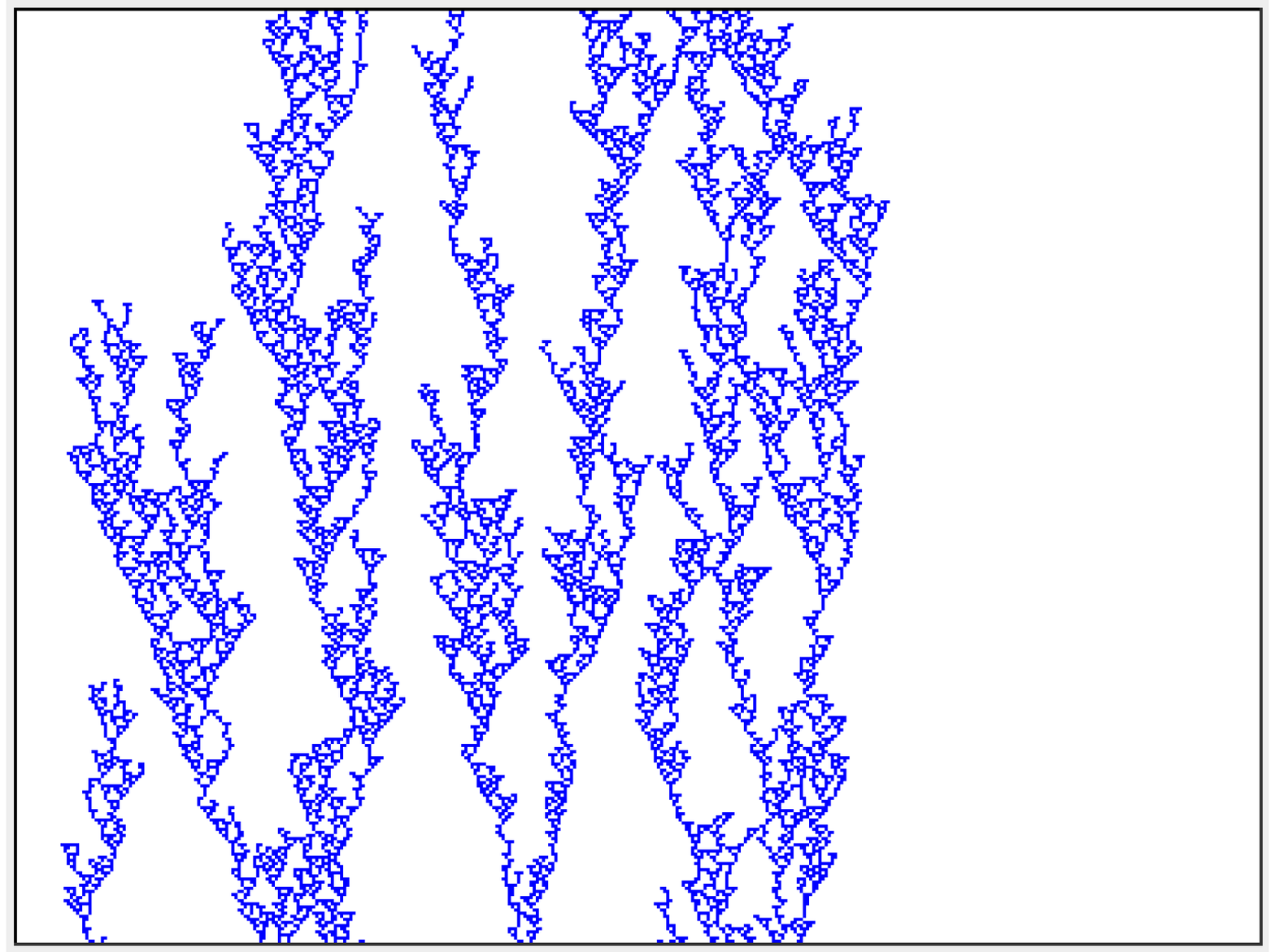}} 
    
    \subfigure[{$t\in[601,900]$}]{\includegraphics[width=0.22\textwidth]{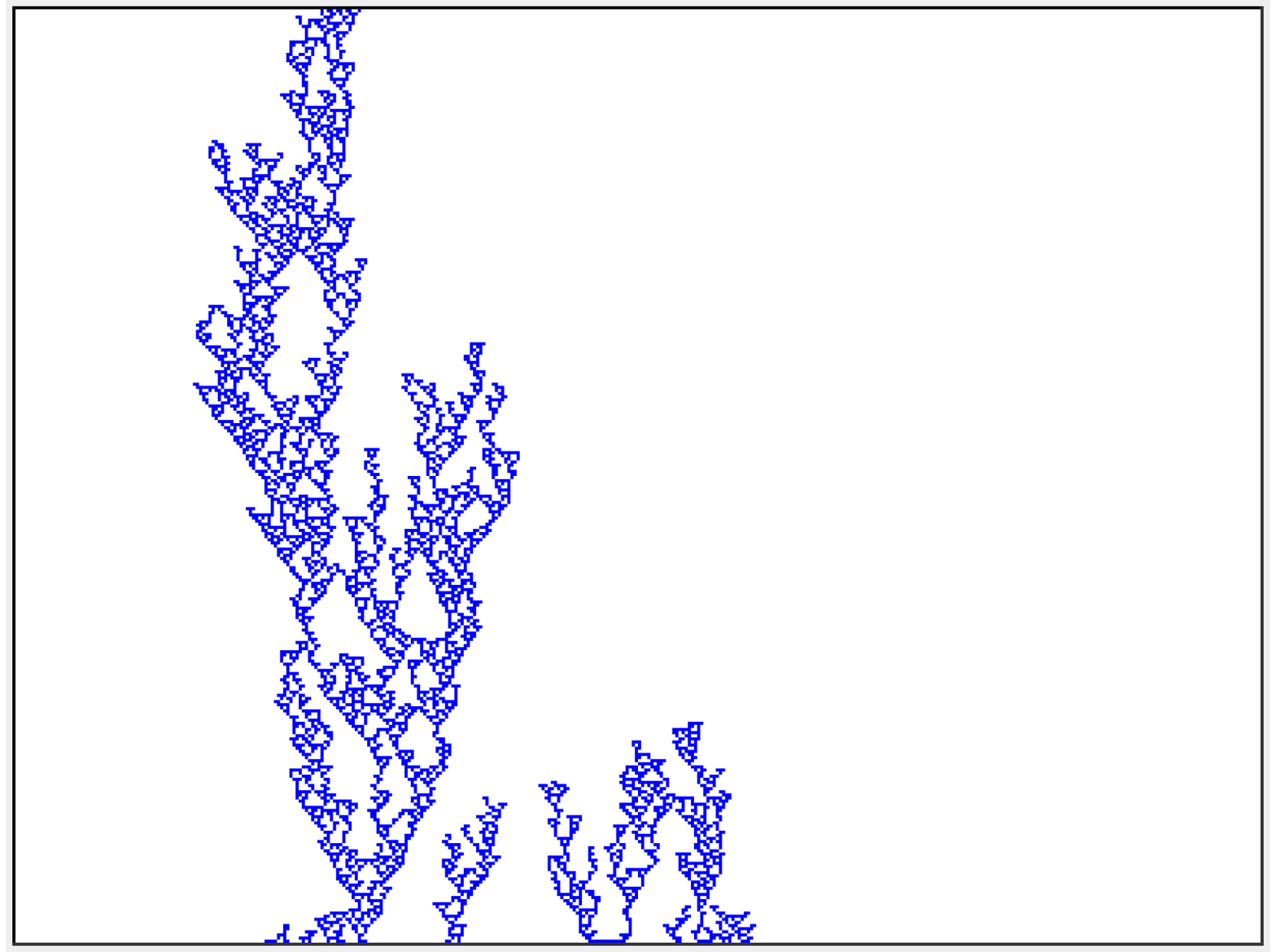}}
    \subfigure[{$t\in[901,1095]$}]{\includegraphics[width=0.22\textwidth]{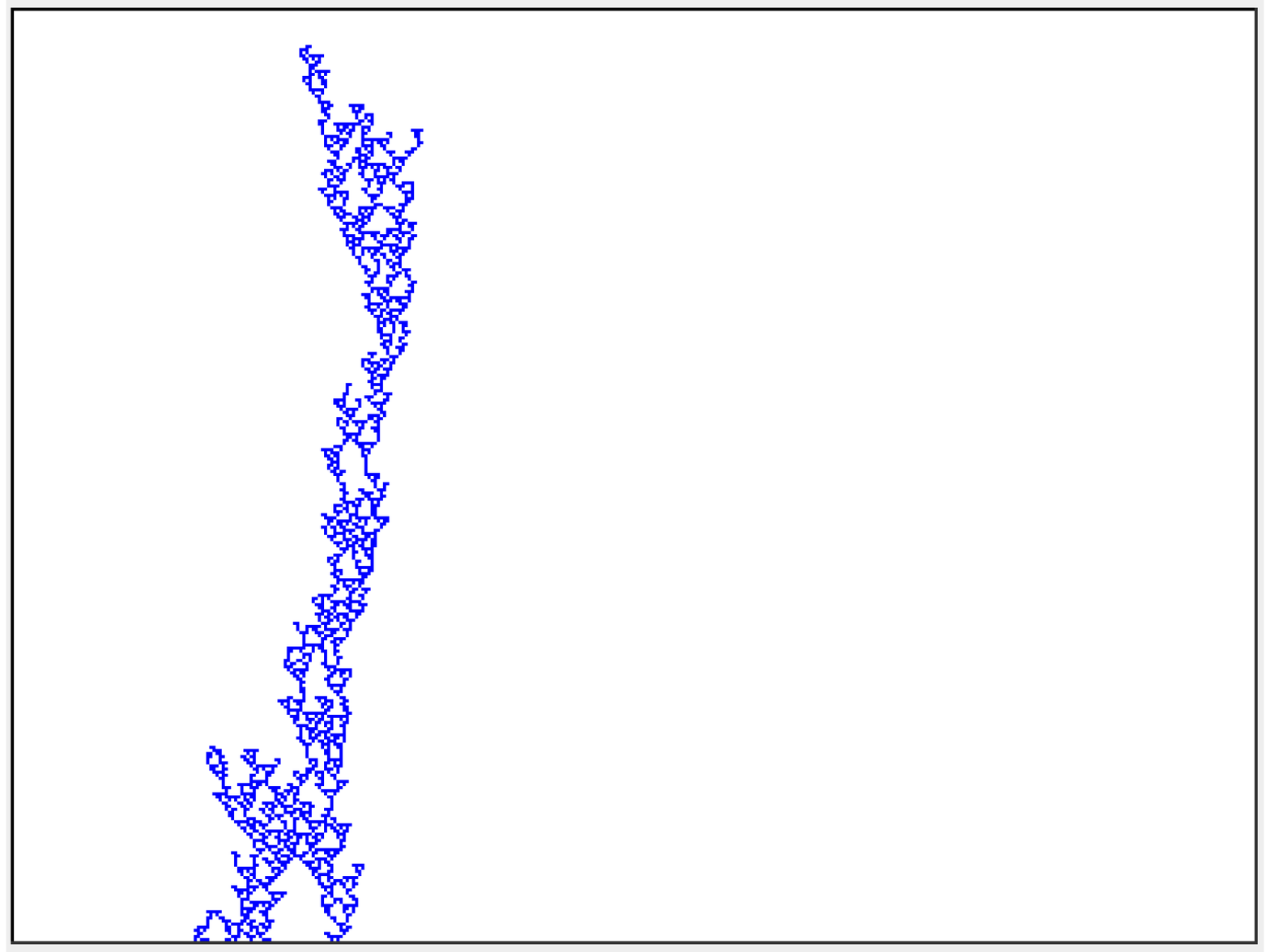}}
    \caption{Time evolution of the diploid with $400$ cells, mixture of rules 60 and 102, with parameter $\lambda=1/2$. The initial configuration is a realization of a Bernoulli distribution of cells with parameter $1/2$. The time evolves upwards (a) $t\in[0,300]$  (b) $t\in[301,600]$  (c) $t\in[601,900]$  (d) $t\in[901,1095]$. At time $t=1080$ the zero-density state is reached.}
    \label{fig:dipsimo}
\end{figure}
We consider now the late-time behavior of the diploid with a general $\lambda$, see \eqref{mod010bis}, 
with $f_1$ the rule \gls{eca} 60 and 
$f_2$ the rule \gls{eca} 102.
The Monte Carlo simulation for the density is
reported in Figure~\ref{f:fig-dens0}: the simulation was 
performed on the lattice with $n=10^5$ cells, for a total 
time of $10^5$. The density was estimated using \eqref{e:densityMC}
with $10^4$ as thermalization time and $10^2$ as decorrelation time.
According to the Monte Carlo results, the system undergoes two transitions 
between a zero and non-zero density stationary state, with a finite zero-density region in between. The presence of two
transition points ($\lambda_c^{l}$ and $\lambda_c^{r}$),
symmetric with respect to $1/2$, is 
an obvious consequence of the left-right symmetry of the model. In our simulation we have $\lambda_c^{l}\in(0.37,0.38)$ and $\lambda_c^{r}\in(0.62,0.63)$.
\begin{figure}
\centering
\includegraphics[width=0.47\textwidth,height=0.37\textwidth]{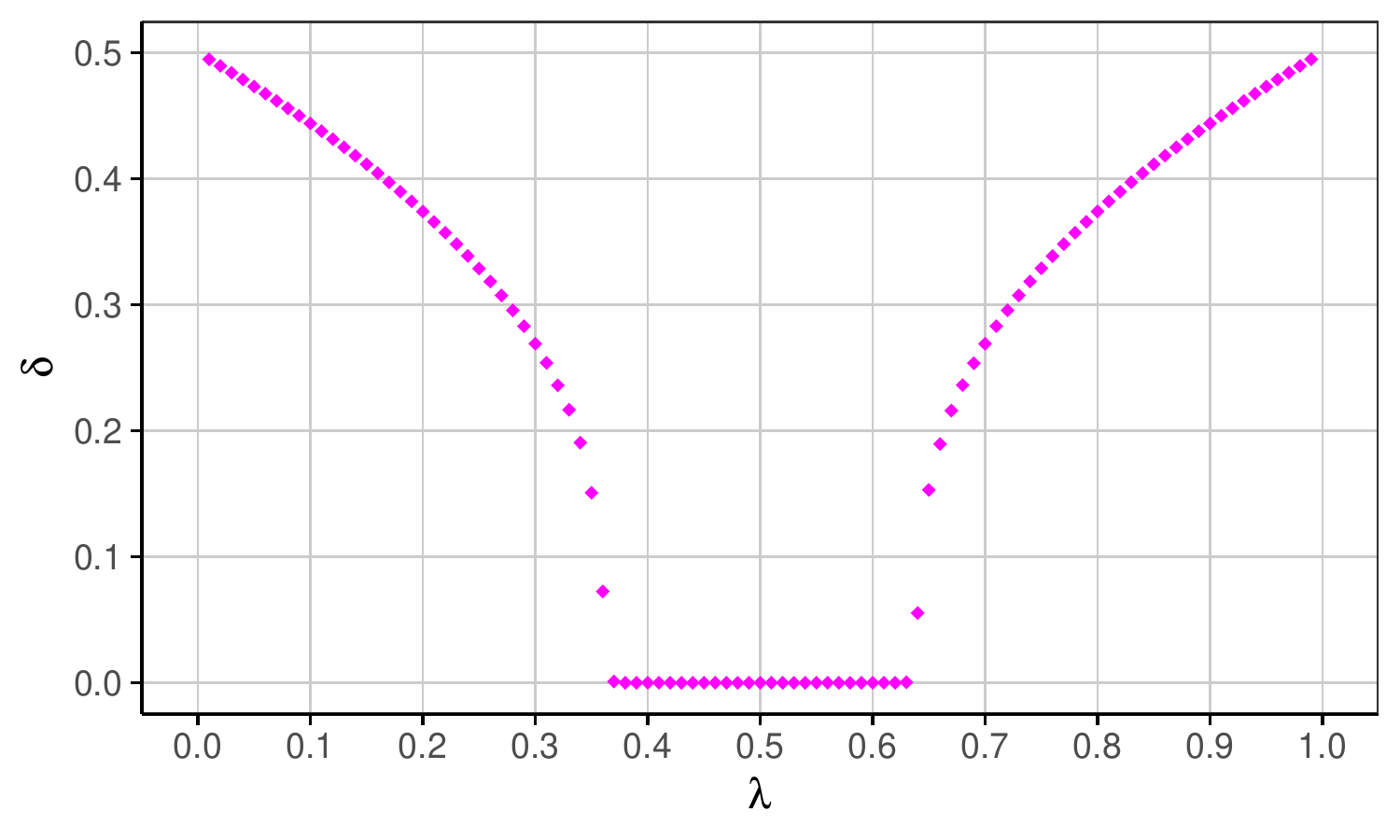}
\caption{Density as function of the mixing parameter $\lambda$ 
computed via Monte Carlo simulations for a lattice with $n=10^5$. The thermalization and decorrelation times were $10^4$ and $10^2$ respectively. 
}
\label{f:fig-dens0}
\end{figure}

\section{Rigorous results for the balanced case}
\label{s:rigorous} 
We consider here the diploid mixture of ECA 60 and 102 with $\lambda=1/2$ (see Figure~\ref{fig:dipsimo} for a realization of a path reaching the stationary absorbing state with zero density), whose relaxation towards the stationarity density can be studied rigorously. This diploid can indeed be  written as:
\begin{equation}
\label{e:cri}
\zeta_i^{t+1}= \begin{cases}\zeta_{i-1}^t \oplus \zeta_i^t & \text { if } \zeta_{i-1}^t=\zeta_{i+1}^t \\ \eta_{i, t+1} & \text { if } \zeta_{i-1}^t \neq \zeta_{i+1}^t\end{cases}
\end{equation}
where  $\eta_{i, t+1} \stackrel{i.i.d.} {\sim} \textnormal{Ber}(1/2)$. 
In other words, for $\lambda=1/2$ at each site and time step, 
 if the two outer neighbors are equal, the update acts deterministically,
      $\zeta_i^{t+1}=\zeta_{i-1}^t\oplus \zeta_i^t$;
if the two outer neighbors differ, the site updates randomly, $\zeta_i^{t+1}=\eta_{i,t+1}$ with $\eta_{i,t+1}\sim\mathrm{Ber}(1/2)$.

 We will prove that, on a finite periodic lattice, the all-zero absorbing 
configuration $\mathbf{0}$ is reached almost surely in finitely many steps.
Using the fact that the deterministic sites $D_t$ (see below) evolve 
according to the XOR rule and that to the complementary sites $F_t$ are assigned 
random bits, we will show that there exists a positive-probabiity 
sequence of realizations of the Bernoulli variables driving the system
to the absorbing state $\mathbf{0}$ in finitely many steps.    
Our goal is to show that there exists a positive-probability sequence of realizations of Bernoulli variables driving the system to the absorbing state $\mathbf{0}$.

For any fixed $t$  we define the sets:
\begin{equation}\label{eq:D-F}
D_t:=\{i\in\bbZ_n:\zeta^t_{i-1}=\zeta^t_{i+1}\},\qquad F_t:=\bbZ_n\setminus D_t.
\end{equation}
A configuration $y\in X_n$ is called \emph{forbidden} if there exists a site $i\in\bbZ_n$ such that $(y_{i-1},y_i,y_{i+1})\in \{(101),(010)\}$, otherwise it is called \emph{admissible}. With abuse of language, we also call a triple \emph{admissible} if it differs from $101$ and $010$.

%

We give now a sketch of the structure of the proof.
The first observation (Lemma~\ref{lem:det-no-forbidden}) is that a pair of deterministic sites (i.e., belonging to $D_t$) in a  triple can never produce
$010$ or $101$ under the PCA update.
The second ingredient is an admissible-triple construction on any finite
interval (Lemma~\ref{lem:interval}): we show that it is possible to choose the Bernoulli variables such that in one time step the evolved configuration is admissible. The third ingredient is a local cleaning principle (Lemma~\ref{lem:cleaning}): on any interval where all
triples are admissible, setting all Bernoulli random variables to zero cleans the entire
interval to zero in one step. Using these tools, we then perform a four-step construction.  In the first step,
we apply Lemma~\ref{lem:interval} at time $t$ to the large bulk interval $\{1,\dots,n-2\}$ to
ensure that all triples with centers $\{2,\dots,n-3\}$ are admissible at time
$t+1$.  In the second step, Lemma~\ref{lem:cleaning} is used to force all bulk
spins $\{2,\dots,n-3\}$ to zero at time $t+2$.  The only non-zero sites 
may lie in the four-site boundary block $\{n-2,n-1,0,1\}$.  At this point the boundary block has
the form $(0,a,b,c,d,0)$. This block is handled by a four-site boundary lemma (Lemma~\ref{lem:4site}), which observes that
the six-site boundary window $(0,a,b,c,d,0)$ is itself an interval with fixed
zero boundaries and can therefore be treated by another application of Lemma~\ref{lem:interval}.  This yields admissibility of all triples at
time $t+3$.  Finally, one last application of Lemma~\ref{lem:cleaning} forces the
entire configuration to zero at time $t+4$. Altogether, this shows that $\mathbf{0}$ is reachable in at most four steps
with strictly positive probability, establishing Theorem~\ref{thm:4step}.

We now prove the lemmas needed for the four–step absorption scheme.


\begin{lemma}
\label{lem:det-no-forbidden}

Consider the evolved configuration $\zeta^{t}$ at time $t$ and the site 
$i\in\bbZ_n$. If at least two of the sites $i-1,i,i+1$ belong to $D_t$, then, for every
choice of the Bernoulli variables at time $t+1$, the triple
$(\zeta^{t+1}_{i-1},\zeta^{t+1}_i,\zeta^{t+1}_{i+1})$ is admissible.

\end{lemma}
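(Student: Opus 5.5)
The plan is to reduce the statement to a short case analysis on which pair of the three sites $i-1,i,i+1$ lies in $D_t$. The key observation is that a forbidden triple ($010$ or $101$) is exactly one whose two outer entries are equal and whose central entry differs from them. So it suffices, in each case, to exhibit two entries of the new triple that are forced to coincide in a way that excludes this pattern.

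To set up notation, write $a=\zeta^t_{i-2}$, $b=\zeta^t_{i-1}$, $c=\zeta^t_i$, $d=\zeta^t_{i+1}$, $e=\zeta^t_{i+2}$. By \eqref{e:cri} and \eqref{eq:D-F}, the three relevant sites behave as follows:
\begin{itemize}
\item if $i-1\in D_t$, then $a=c$ and $\zeta^{t+1}_{i-1}=a\oplus b$;
\item if $i\in D_t$, then $b=d$ and $\zeta^{t+1}_i=b\oplus c$;
\item if $i+1\in D_t$, then $c=e$ and $\zeta^{t+1}_{i+1}=c\oplus d$.
\end{itemize}
The Bernoulli variables only enter at sites of $F_t$. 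Since every case below assumes at least two deterministic sites, at most one entry of the new triple is random.

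\emph{Case $i-1,i\in D_t$.} From $a=c$ we get $\zeta^{t+1}_{i-1}=a\oplus b=c\oplus b=\zeta^{t+1}_i$. The first two entries coincide, so the triple is neither $010$ nor $101$, whatever the third entry is.

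\emph{Case $i,i+1\in D_t$.} Symmetrically, $b=d$ gives $\zeta^{t+1}_{i+1}=c\oplus d=c\oplus b=\zeta^{t+1}_i$. The last two entries coincide.

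\emph{Case $i-1,i+1\in D_t$.} This is the only case needing a further split. Here $a=c=e$, so the new outer values are $c\oplus b$ and $c\oplus d$. If $b\neq d$, the outer entries differ and the triple is admissible. If $b=d$, then $i\in D_t$ as well, and the centre is $b\oplus c$, which equals both outer entries, so the triple is constant.

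The case involving two non-adjacent sites, $i-1$ and $i+1$, is the only place requiring care, since there the central value may a priori be random. The resolution is that equality of the outer outputs forces $b=d$, which makes the centre deterministic too. Nothing beyond \eqref{e:cri} is used, and the argument holds for every realization of the $\eta$'s, as the statement requires.
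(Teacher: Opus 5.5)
Your proof is correct and follows essentially the same route as the paper: the same three-case split on which pair of $i-1,i,i+1$ lies in $D_t$, with the same forced relations (two adjacent entries equal in the adjacent cases, and unequal outer entries when $i-1,i+1\in D_t$ and $i\in F_t$). The only cosmetic difference is that you handle the all-deterministic subcase $b=d$ inside the third case, whereas the paper absorbs it into the first two cases.
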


\begin{proof}

Assume that at least two of the sites $(i-1,i,i+1)$ are deterministic at time $t$. 
Due to the left-right symmetry, there are only three possible configurations
of deterministic sites.
For each of them, the deterministic updates impose a
forced algebraic relation among the components of the triple
$(\zeta^{t+1}_{i-1},\zeta^{t+1}_i,\zeta^{t+1}_{i+1})$, independently of the Bernoulli variables. These relations are summarized in
Table~\ref{tab:deterministic_triples}.
In all three cases, the forced relation is incompatible with the patterns
$010$ and $101$:
the first two cases force equality of two adjacent entries, while the third
forces inequality of the two outer entries, whereas both forbidden patterns
require $\zeta^{t+1}_{i-1}\neq \zeta^{t+1}_i=\zeta^{t+1}_{i+1}$ or $\zeta^{t+1}_{i-1}=\zeta^{t+1}_{i+1}\neq \zeta^{t+1}_i$.

If all three sites $i-1,i,i+1$ belong to $D_t$, then one of the first two cases
of Table~\ref{tab:deterministic_triples} applies. We briefly justify the forced relations reported in
Table~\ref{tab:deterministic_triples}.
If $\{i-1,i\}\subseteq D_t$, then $\zeta^t_{i-2}=\zeta^t_i$ and the deterministic
updates give
\[
\zeta^{t+1}_{i-1}=\zeta^t_{i-2}\xor\zeta^t_{i-1}
=\zeta^t_i\xor\zeta^t_{i-1}
=\zeta^{t+1}_i,
\]
so the first two entries of the triple coincide.
The case $\{i,i+1\}\subset D_t$ is completely analogous by left-right symmetry.
Finally, if $\{i-1,i+1\}\subseteq D_t$ and $i\in F_t$, then
\[
\zeta^{t+1}_{i-1}=\zeta^t_i\xor\zeta^t_{i-1},\qquad
\zeta^{t+1}_{i+1}=\zeta^t_i\xor\zeta^t_{i+1},
\]
and since $i\in F_t$ implies $\zeta^t_{i-1}\neq\zeta^t_{i+1}$, XOR with the same
bit $\zeta^t_i$ preserves inequality, yielding
$\zeta^{t+1}_{i-1}\neq\zeta^{t+1}_{i+1}$.
In all cases, the resulting relations are incompatible with the forbidden
triples $010$ and $101$.
\end{proof}

\begin{table}[t]
\centering
\setlength{\tabcolsep}{8pt}
\renewcommand{\arraystretch}{1.2}
\begin{tabular}{ll}
\toprule
Deterministic sites at time $t$
&
Forced relation 
\\
\midrule
$\{i-1,i\}\subseteq D_t$
&
$\zeta^{t+1}_{i-1}=\zeta^{t+1}_i$
\\

$\{i,i+1\}\subseteq D_t$
&
$\zeta^{t+1}_i=\zeta^{t+1}_{i+1}$
\\

$\{i-1,i+1\}\subseteq D_t$, $i\in F_t$
&
$\zeta^{t+1}_{i-1}\neq\zeta^{t+1}_{i+1}$
\\
\bottomrule
\end{tabular}
\caption{Forced algebraic relations on the updated triple
$(\zeta^{t+1}_{i-1},\zeta^{t+1}_i,\zeta^{t+1}_{i+1})$ when at least two sites
among $\{i-1,i,i+1\}$ are deterministic at time $t$.
Each relation excludes the triples $010$ and $101$,
independently of the Bernoulli variables.}
\label{tab:deterministic_triples}
\end{table}

\begin{lemma}
\label{lem:interval}
Fix $n\ge 5$ and consider the evolved configuration  $\zeta^t$.  Consider the interval
\[
I:=\{1,2,\dots,n-2\}\subset \mathbb{Z}_n.
\]
Then there exists a choice of the Bernoulli variables at time $t+1$ such that all triples
$(\zeta^{t+1}_{i-1},\zeta^{t+1}_i,\zeta^{t+1}_{i+1})$ are admissible for every $i\in\{2,\dots,n-3\}$.
\end{lemma}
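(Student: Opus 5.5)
The plan is to choose the Bernoulli variables at time $t+1$ greedily, sweeping the interval $I=\{1,\dots,n-2\}$ from left to right. Write $y=\zeta^{t+1}$. A site $i\in D_t$ has its value $y_i=\zeta^t_{i-1}\oplus\zeta^t_i$ fixed regardless of the Bernoulli variables. For a site $i\in F_t$ we have $y_i=\eta_{i,t+1}$, which we are free to prescribe.

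The rule is as follows. If $1\in F_t$, set $\eta_{1,t+1}=0$; this choice is arbitrary. Then, for $i=2,3,\dots,n-2$ in increasing order, whenever $i\in F_t$ set $\eta_{i,t+1}:=y_{i-1}$. Here $y_{i-1}$ is already determined, either deterministically or by an earlier step of the sweep. The free sites $0$ and $n-1$ lie outside $I$ and may take any value. They enter only the triples centred at $1$ and $n-2$, which the statement does not require to be admissible.

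We then verify admissibility of the triple $(y_{i-1},y_i,y_{i+1})$ for each $i\in\{2,\dots,n-3\}$ by a short case split on the membership of $i$ and $i+1$ in $D_t$ or $F_t$.
\begin{itemize}
\item If $i\in F_t$, the rule gives $y_i=y_{i-1}$. Two adjacent entries coincide, which excludes both $010$ and $101$.
\item If $i\in D_t$ and $i+1\in F_t$, then $i+1\le n-2$ lies in $I$. The rule gives $y_{i+1}=y_i$, and the triple is again admissible.
\item If $i\in D_t$ and $i+1\in D_t$, two sites of the triple are deterministic. Lemma~\ref{lem:det-no-forbidden} then gives admissibility for every choice of the Bernoulli variables.
\end{itemize}
These three cases are exhaustive, so every triple with centre in $\{2,\dots,n-3\}$ is admissible.

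The only delicate point is that all the needed sites lie inside $I$. The sweep must assign site $i+1$ whenever the triple centred at $i$ relies on it, and since $i\le n-3$ we have $i+1\le n-2$. At the left end, the triple centred at $2$ uses $y_1$, which is either deterministic or already chosen, so nothing outside $I$ is needed. The hypothesis $n\ge5$ only guarantees that $\{2,\dots,n-3\}$ is nonempty.

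Finally, the construction fixes finitely many independent $\mathrm{Ber}(1/2)$ variables, so the chosen realization has probability at least $2^{-n}>0$. This is the form in which the lemma will be used in the four-step scheme.
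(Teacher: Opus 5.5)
Your proposal is correct and is essentially the paper's own proof: the same left-to-right sweep that sets each free site of $I$ equal to its already-determined left neighbour, and the same three-case check (free centre; deterministic centre with a free right neighbour; two adjacent deterministic sites handled by Lemma~\ref{lem:det-no-forbidden}), merely indexed by the centre $i$ rather than by the right endpoint of the triple as in the paper. Your extra remarks, that only sites of $I$ are involved and that the chosen realization has probability $2^{-|F_t\cap I|}>0$, are accurate bookkeeping that the paper leaves implicit.
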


\begin{proof}
 
Let $I:=\{1,2,\dots,n-2\}$.  
We construct the variables $y_1,\dots,y_{n-2}$ sequentially from the left to the right.

Starting from $i=1$, we distinguish two cases. 
If $1\in D_t$, set $y_1:=\zeta^t_{0}\oplus\zeta^t_{1}$.
For $1\in F_t$, choose $y_1$ arbitrarily (e.g.,\ $y_1:=0$).

If  $i\in\{2,3,\dots,n-2\}$, we define $y_i$ by:
\[
y_i :=
\begin{cases}
\zeta^t_{i-1}\oplus\zeta^t_{i}, & \text{if } i\in D_t,\\[1mm]
y_{i-1}, & \text{if } i\in F_t.
\end{cases}
\]
For each $i\in\{3,4,\dots,n-2\}$, we show that the triple   $(y_{i-2},y_{i-1},y_i)$ centered at
$i-1$  is admissible; this will imply the claim.

Fix $i\in\{3,\dots,n-2\}$ and consider the triple $(y_{i-2},y_{i-1},y_i)$.
We distinguish three cases.

\smallskip
\noindent\emph{Case 1: $i\in F_t$.}
Then $y_i=y_{i-1}$ by the algorithm, hence the last two entries of the triple coincide. Therefore, the triple is admissible.

\smallskip
\noindent\emph{Case 2: $i\in D_t, i-1\in F_t$}. 
Therefore, the algorithm set $y_{i-1}=y_{i-2}$,
so that the first two entries of the triple centered at $i-1$ coincide. Hence, it is admissible.\\
\noindent\emph{Case 3: $i\in D_t, i-1\in D_t$}.
Thus, by
Lemma~\ref{lem:det-no-forbidden} the triple centered at $i-1$ is admissible.
\end{proof}
\begin{remark}\label{rem:forced-forbidden}
We would like to stress that  \emph{locally} we can easily prove that at every site
$i$ there exists at least one choice of the local Bernoulli variable that makes the triple centered at $i$ admissible at time $t\!+\!1$.

Fix, indeed, a site $i\in\bbZ_n$.  
Suppose that, at time $t+1$, the triple
$(\zeta^{t+1}_{i-1},\,\zeta^{t+1}_i,\,\zeta^{t+1}_{i+1})
$
were equal to $010$ or  $101$ with probability~$1$ over the randomness at
time~$t$. Then each of its three coordinates must be \emph{deterministically} fixed by
the update rule~\eqref{e:cri}.
For instance, if the triple is almost surely $010$, then necessarily
\begin{align*}
\zeta^{t+1}_{i-1}=0, \qquad 
\zeta^{t+1}_i=1, \qquad
\zeta^{t+1}_{i+1}=0.
\end{align*}
Examining the update rule, this implies:
\begin{enumerate}
\item the site $i-1$ must be deterministic at time $t$, and its neighbours satisfy
$(\zeta^t_{i-2},\zeta^t_{i})\in\{(0,0),(1,1)\}$;
\item the site $i$ must also be deterministic at time $t$, but its neighbours must satisfy
$(\zeta^t_{i-1},\zeta^t_{i+1})\in\{(0,1),(1,0)\}$;
\item similarly $i+1$ must satisfy $(\zeta^t_{i},\zeta^t_{i+2})\in\{(0,0),(1,1)\}$.
\end{enumerate}
These conditions are incompatible:  
from the first item we obtain $\zeta^t_{i-1}=\zeta^t_i$, while from the second item we require $\zeta^t_{i-1}\neq\zeta^t_i$.  
Thus the preimage of a triple which is almost surely $010$ (or $101$) is empty.

 However, this local argument does \emph{not} imply Lemma~\ref{lem:interval}, since it does not guarantee that all triples can be made admissible \emph{simultaneously}.
Indeed, triples overlap, so the admissible choices at different sites may interact.  Lemma~\ref{lem:interval} shows indeed that, despite these global compatibility constraints, a full assignment of the Bernoulli variables
making every triple admissible always exists.
\end{remark}

\begin{lemma}
\label{lem:cleaning}
Assume $\zeta^t=x$, with $x$ a configuration with all admissible triples. If we choose $\eta_{j,t+1}=0$ whenever $j\in F_t$, then
\[
\zeta^{t+1} = \mathbf{0}.
\]
\end{lemma}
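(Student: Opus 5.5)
The plan is to check the update rule \eqref{e:cri} site by site and split according to the partition $\bbZ_n=D_t\cup F_t$ of \eqref{eq:D-F}. Since every cell is updated simultaneously and independently, it suffices to show $\zeta^{t+1}_i=0$ for each fixed $i$. The argument is purely local: it uses only the admissibility of the triple centred at $i$. So the same reasoning also gives the interval version used in the four-step scheme, where only the triples centred in a sub-interval are admissible and only the sites of that sub-interval are cleaned.

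\emph{Sites in $F_t$.} If $i\in F_t$, then by \eqref{e:cri} we have $\zeta^{t+1}_i=\eta_{i,t+1}$. By the choice in the statement, $\eta_{i,t+1}=0$, so $\zeta^{t+1}_i=0$ immediately.

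\emph{Sites in $D_t$.} If $i\in D_t$, then $x_{i-1}=x_{i+1}$ and the update is deterministic, $\zeta^{t+1}_i=x_{i-1}\oplus x_i$. The key step is to reformulate admissibility. The triple $(x_{i-1},x_i,x_{i+1})$ is forbidden exactly when $x_{i-1}=x_{i+1}\neq x_i$, since $010$ and $101$ are precisely the triples with equal outer entries and a different centre. So admissibility of the triple centred at $i$ says: whenever the outer entries coincide, the centre equals them. Applying this with $x_{i-1}=x_{i+1}$ gives $x_i=x_{i-1}$, and hence $\zeta^{t+1}_i=x_{i-1}\oplus x_i=0$.

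Combining the two cases gives $\zeta^{t+1}=\mathbf 0$. There is no real obstacle here. The only point needing care is to state the equivalence ``admissible at $i$ and $i\in D_t$ implies $x_i=x_{i-1}$'' explicitly, because it is exactly what makes the deterministic XOR vanish. It is also worth remarking that the choice of the Bernoulli variables occurs with probability $2^{-|F_t|}>0$, which is what the later counting argument needs.
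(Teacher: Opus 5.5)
Your proof is correct and follows the paper's argument: sites in $F_t$ are set to zero by the choice $\eta_{j,t+1}=0$, and for sites in $D_t$ admissibility of the centred triple forces $x_{j-1}=x_j=x_{j+1}$, so the deterministic XOR vanishes. Your remark that the argument is purely local, cleaning exactly the sites whose centred triple is admissible, is a useful addition, since Step~2 of the proof of Theorem~\ref{thm:4step} relies on that localized version.
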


\begin{proof}
If $j\in D_t$, admissibility forces $x_{j-1}=x_j=x_{j+1}$, and hence
$\zeta^{t+1}_j=x_{j-1}\oplus x_j=0$.
If $j\in F_t$, setting $\eta_{j,t+1}=0$ gives $\zeta^{t+1}_j=0$ by definition.
\end{proof}

\begin{lemma}
\label{lem:4site}
Let $z\in X_n$ be a configuration such that $z_i=0$ for all $i \in \{2,3,\dots,n-3\}$.
If $\zeta^t=z$, then there exists a choice of Bernoulli variables at time $t+1$ such that the evolved 
configuration $\zeta^{t+1}$ is such that all the triples centered at $n-2,n-1,0,1$ are admissible.

\end{lemma}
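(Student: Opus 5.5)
The plan is to reduce the statement to Lemma~\ref{lem:interval} using the translation invariance of the update rule~\eqref{e:cri}. The triples centered at $n-2,n-1,0,1$ involve only the six sites $\{n-3,n-2,n-1,0,1,2\}$. Since $z_{n-3}=z_2=0$, this window has the form $(0,a,b,c,d,0)$, as anticipated in the sketch of the proof. The rule~\eqref{e:cri} commutes with the cyclic shift $\tau_s:i\mapsto i+s$, and so do the definitions~\eqref{eq:D-F} of $D_t,F_t$ and the notion of admissibility. It therefore suffices to consider the shifted configuration $z'=z\circ\tau_{-4}$, i.e.\ $z'_{j}=z_{j-4}$. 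Under this relabeling the target centers $n-2,n-1,0,1$ become $2,3,4,5$.

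First, for $n\ge 8$, we have $\{2,3,4,5\}\subseteq\{2,\dots,n-3\}$. Applying Lemma~\ref{lem:interval} to $\zeta^t=z'$ gives a choice of the Bernoulli variables at time $t+1$ such that every triple centered in $\{2,\dots,n-3\}$ is admissible, in particular those centered at $2,3,4,5$. Shifting back by $\tau_4$, and transporting the Bernoulli variables accordingly, $\eta_{i,t+1}\mapsto\eta_{i-4,t+1}$, yields the required choice for $z$. Equivalently, one can rerun the left-to-right construction of Lemma~\ref{lem:interval} directly on $n-3,n-2,\dots,1,2$. Only the values $y_{n-3},\dots,y_2$ matter, and Cases~1--3 of that proof go through verbatim, since they use only the local rule and Lemma~\ref{lem:det-no-forbidden}. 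In this form the argument does not need the hypothesis on the bulk zeros at all, beyond guaranteeing that the window is well defined.

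The main obstacle is the small-ring range $5\le n\le 7$. There the window of six sites wraps around the annulus and overlaps itself, so the sequential construction would need a consistency check where it closes up. For these finitely many cases I would argue directly. The hypothesis forces $z$ to vanish on $\{2,\dots,n-3\}$, which is one, two or three sites, leaving at most $2^4$ configurations per $n$. For each of them one can exhibit an assignment of the $\eta_{i,t+1}$, $i\in F_t$, that makes all triples admissible. Typically this means choosing each free site equal to an already determined deterministic neighbour value, in the spirit of the rule $y_i:=y_{i-1}$, and invoking Lemma~\ref{lem:det-no-forbidden} for pairs of deterministic sites. This is a finite verification, possibly done by computer. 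Should the closing triple fail in some case, I would try starting the sweep at a deterministic site rather than at a free one, so that the initial value is forced and consistency is automatic.
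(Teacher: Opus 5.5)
Your route is the paper's. Its proof of Lemma~\ref{lem:4site} simply reruns the left-to-right sweep of Lemma~\ref{lem:interval} on the window $(z_{n-3},\dots,z_2)$, exactly as in your ``equivalently'' sentence, and, as you note, the zero boundary values are never actually used.

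Your assessment of the small rings is off, though. For $n\ge 6$ the six sites $n-3,n-2,n-1,0,1,2$ are distinct. For $n=6$ they exhaust the ring, but each is visited once and only the four interior triples are required. So the sweep never closes up, and $n=6,7$ need no separate treatment. The only case where the window genuinely wraps is $n=5$, where it reads $2,3,4,0,1,2$. The paper's proof passes over this case silently, and your proposal only promises a finite (possibly computer) check. As written, that case is not proved.

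Your fallback closes it, and for every $n$ at once, if you make it the argument:
\begin{itemize}
\item If $D_t=\emptyset$, set all $\eta_{i,t+1}=0$.
\item Otherwise, give each $i\in F_t$ the time-$(t+1)$ value of the first deterministic site met when moving leftwards from $i$ around the whole ring.
\end{itemize}
Then every free site agrees with its left neighbour at time $t+1$. A deterministic site $i$ agrees with its right neighbour: by copying if $i+1\in F_t$, and because $\zeta^{t+1}_i\oplus\zeta^{t+1}_{i+1}=\zeta^t_{i-1}\oplus\zeta^t_{i+1}=0$ if $i+1\in D_t$. A triple is forbidden exactly when its center differs from both neighbours, so all $n$ triples are admissible, with no use of the bulk zeros.

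Incidentally, this shows that any configuration can be made fully admissible in one step. Lemma~\ref{lem:cleaning} would then already give absorption in two steps.
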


\begin{proof}
Consider the six-site block
\[
(0,a,b,c,d,0)=:(z_{n-3},z_{n-2},z_{n-1},z_0,z_1,z_2).
\]
Regard it as an interval of size $6$ and apply Lemma~\ref{lem:interval} to $I=\{n-3,n-2,n-1,0,1,2\}$ with fixed boundaries $z_{n-3}=z_2=0$.
Lemma~\ref{lem:interval} guarantees that all triples centered not in the boundary are admissible.
\end{proof}


\begin{theorem}
\label{thm:4step}
For any $n\geq5$, consider the configuration $\zeta^t$ at time $t$, then there exists a choice of the Bernoulli
variables over the next four times such that $\zeta^{t+4}\equiv\mathbf{0}$.
In other words, the all-zero configuration is reachable with strictly
positive probability in four steps.
\end{theorem}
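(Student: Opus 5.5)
The proof is the four-step concatenation already sketched in the text. Let $x=\zeta^t$ be arbitrary.

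\emph{Step 1 (time $t\to t+1$).} I apply Lemma~\ref{lem:interval} to the bulk interval $I=\{1,\dots,n-2\}$. It gives Bernoulli values on the free sites of $I$ such that every triple centered in $\{2,\dots,n-3\}$ is admissible at time $t+1$. The Bernoulli variables outside $I$ (at $0$ and $n-1$) can be set arbitrarily, say to $0$. They do not affect those triples, because a triple centered at $i\in\{2,\dots,n-3\}$ only involves sites in $\{1,\dots,n-2\}$.

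\emph{Step 2 ($t+1\to t+2$).} I use the proof of Lemma~\ref{lem:cleaning} locally rather than Lemma~\ref{lem:cleaning} itself. For $j\in\{2,\dots,n-3\}$, the triple centered at $j$ is admissible. If $j\in D_{t+1}$, admissibility forces $\zeta^{t+1}_{j-1}=\zeta^{t+1}_j=\zeta^{t+1}_{j+1}$, so $\zeta^{t+2}_j=0$. If $j\in F_{t+1}$, I set $\eta_{j,t+2}=0$. The remaining sites get arbitrary choices. Hence $\zeta^{t+2}$ vanishes on $\{2,\dots,n-3\}$, so it satisfies the hypothesis of Lemma~\ref{lem:4site}.

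\emph{Step 3 ($t+2\to t+3$).} Lemma~\ref{lem:4site} gives a choice such that the triples centered at $n-2,n-1,0,1$ are admissible at time $t+3$. I must also keep the bulk triples admissible, and this is the main obstacle, since Lemma~\ref{lem:4site} is stated only for the boundary triples. I handle it as follows. Outside the window $\{n-3,\dots,2\}$ the configuration $\zeta^{t+2}$ is identically $0$. Hence every site $j\in\{4,\dots,n-5\}$ lies in $D_{t+2}$ and updates to $0\oplus 0=0$. Every $F$-site among $\{2,3,n-4,n-3\}$ gets $\eta=0$, and deterministic sites there also give $0$ unless they neighbor the window. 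So for $n$ large the bulk is all zeros except possibly sites adjacent to the boundary block. I then check that the sequential construction in Lemma~\ref{lem:interval}, run on the whole ring, can be started in the zero region. Concretely, I choose $y_i$ left to right, starting from a site deep in the zero bulk and proceeding cyclically through the window back into the bulk. This yields admissibility of all triples simultaneously: Cases 1--3 of that proof apply at every center, and only the closing point lies in a region where everything is $0$ and deterministic.

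For small $n$ ($5\le n\le 8$), where the window covers the whole ring, I would run this cyclic sequential assignment directly. If the closing point is problematic there, I would fall back on a finite case check.

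\emph{Step 4 ($t+3\to t+4$).} All triples of $\zeta^{t+3}$ are now admissible, so Lemma~\ref{lem:cleaning} with $\eta_{j,t+4}=0$ on $F_{t+3}$ gives $\zeta^{t+4}=\mathbf 0$. Each step fixes finitely many Bernoulli variables to specified values. The chosen realization therefore has probability at least $2^{-4n}>0$, which proves reachability with positive probability.
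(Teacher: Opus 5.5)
Your Steps 1, 2 and 4 are the paper's. In Step 3 you correctly spot what the paper glosses over: Lemma~\ref{lem:4site} only makes the triples centered at $n-2,n-1,0,1$ admissible, yet the paper then asserts that $\zeta^{t+3}$ has only admissible triples. Your cyclic restart repairs this whenever some $s$ has $s-1,s,s+1\in\{2,\dots,n-3\}$, i.e.\ for $n\ge 7$. In that case $s\in D_{t+2}$ with $y_s=0$, and $y_{s+1}=0$ (either a copy of $y_s$ or $\zeta^{t+2}_s\oplus\zeta^{t+2}_{s+1}=0$). Also $y_{s-1}$ is either a copy of $y_{s-2}$ or equals $0$, so both closing triples are admissible.

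The genuine gap is $n=5,6$. You defer these to a case check you do not carry out, and there the closing really can fail. Take $n=6$ and $\zeta^{t+2}=(0,1,0,0,1,0)$ on sites $0,\dots,5$. It vanishes on $\{2,3\}$, but $2,3\in F_{t+2}$, so there is no deterministic zero site to start from. Start at $3$ with $y_3=0$; this is also what Lemma~\ref{lem:4site} produces when it chooses $0$ at its first free site. You get $y_4=1$, $y_5=y_0=1$, $y_1=1$, $y_2=1$, i.e.\ $\zeta^{t+3}=(1,1,1,0,1,1)$, which has $101$ centered at $3$. Then $3\in D_{t+3}$ and $\zeta^{t+4}_3=1$. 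So the paper's Step 3, read literally, fails at $n=6$ as well.

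The missing idea is to anchor the cyclic construction on the $D/F$ pattern rather than on a zero region. This works from any configuration and for every $n\ge5$:
\begin{itemize}
\item If $D_t=\emptyset$, set all $\eta_{i,t+1}=0$, so $\zeta^{t+1}=\mathbf 0$ directly.
\item If $F_t=\emptyset$, every triple consists of deterministic sites and Lemma~\ref{lem:det-no-forbidden} applies.
\item Otherwise pick $p\in D_t$ with $p-1\in F_t$. Give $y_p$ its deterministic value and run the sequential rule of Lemma~\ref{lem:interval} through $p+1,p+2,\dots,p-1$.
\end{itemize}
In the last case, Cases 1--3 of that proof cover the centers $p+1,\dots,p-2$. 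At center $p-1$ you have $y_{p-1}=y_{p-2}$, since $p-1\in F_t$ copies its left neighbour. At center $p$ you have $y_{p+1}=y_p$: a copy if $p+1\in F_t$, and the second row of Table~\ref{tab:deterministic_triples} if $p+1\in D_t$.

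Hence every triple of $\zeta^{t+1}$ is admissible, and Lemma~\ref{lem:cleaning} gives $\zeta^{t+2}=\mathbf 0$. Since $\mathbf 0$ is absorbing, $\zeta^{t+4}=\mathbf 0$. This settles your small-$n$ case and in fact makes the bulk/boundary split of Steps 1--3 unnecessary.
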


\begin{proof}
We provide a proof organized in four steps.\\
\emph{Step 1 (interval admissibility on the bulk).}
Apply Lemma~\ref{lem:interval} to the interval $\{1,2,\dots,n-2\}$, and choose the Bernoulli variables so that
$\zeta^{t+1}$ is admissible on all triples centered at
$\{2,\dots,n-3\}$.

\emph{Step 2 (local cleaning of the bulk).}
Choose $\eta_{i,t+2}=0$ for all $i\in\{2,\dots,n-3\}\cap F_{t+1}$.
By Lemma~\ref{lem:cleaning}, this yields
\[
\zeta^{t+2}_i = 0 \quad\text{for all } i\in\{2,\dots,n-3\}.
\]

\emph{Step 3 (boundary cleaning).}
The configuration $\zeta^{t+2}$ has the form
\[
(0,a,b,c,d,0)
\quad\text{on }\{n-3,n-2,n-1,0,1,2\},
\]
and zero otherwise.
Apply Lemma~\ref{lem:4site} at time $t+2$.
This yields $\zeta^{t+3}$ has only admissible triples.

\emph{Step 4 (global cleaning).}
Since $\zeta^{t+3}$ has only admissible triples, setting all
$\eta_{i,t+4}=0$, for all $i\in F_{t+4}$, yields $\zeta^{t+4}_i=0$ for all $i$ by
Lemma~\ref{lem:cleaning}.
Hence, $\zeta^{t+4}\equiv\mathbf{0}$.

All four random variable assignments have positive probability, therefore the
probability of this four-step path to $\mathbf{0}$ is strictly positive.
\end{proof}

Before stating Theorem~\ref{cor:aas}, we stress that its proof is standard in the theory of
finite-state Markov chains. For a Markov chain with a unique absorbing state, if this state
is reachable from every other configuration with strictly positive probability, then absorption
occurs almost surely in finite time. We nevertheless provide a proof for the sake of completeness and to make explicit the role of
Theorem~\ref{thm:4step} in the almost sure absorption.

\begin{theorem}\label{cor:aas}
For $\lambda=1/2$, and $n\geq 5$, the configuration $\mathbf{0}$ is the unique absorbing
state of the PCA \eqref{e:cri}, and it is reached almost surely in finite time
from any initial configuration.
\end{theorem}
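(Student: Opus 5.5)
The proof has two parts: showing that $\mathbf{0}$ is absorbing and is the only absorbing state, and then deducing almost sure absorption from Theorem~\ref{thm:4step} by a geometric-trials argument.

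\emph{Absorbing state and uniqueness.} If $\zeta^t=\mathbf{0}$, then every site satisfies $\zeta^t_{i-1}=\zeta^t_{i+1}=0$. Hence $D_t=\bbZ_n$, and by \eqref{e:cri} we get $\zeta^{t+1}_i=0\oplus0=0$ for every $i$, with no randomness involved. So $p(\mathbf{0},\mathbf{0})=1$ and $\mathbf{0}$ is absorbing.

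For uniqueness, suppose $x\neq\mathbf{0}$ were absorbing, i.e.\ $p(x,x)=1$. Theorem~\ref{thm:4step} gives $\mathbb{P}_x(\zeta^4=\mathbf{0})>0$. On the other hand, $p(x,x)=1$ forces $\mathbb{P}_x(\zeta^4=x)=1$. These are incompatible, so no such $x$ exists.

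\emph{Uniform lower bound.} Next we quantify Theorem~\ref{thm:4step}. The specified choice involves at most $4n$ Bernoulli$(1/2)$ variables, each fixed to one value, and they are independent. So for every configuration $x$,
\[
\mathbb{P}_x(\zeta^4=\mathbf{0})\ \ge\ 2^{-4n}=:\varepsilon>0 .
\]
Alternatively, since $X_n$ is finite, one can take the minimum over $x$ of the positive probabilities given by the theorem. Either way the bound is uniform in the starting configuration; this uniformity is the only point needing care, and I expect it to be the main (mild) obstacle.

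\emph{Geometric trials.} Let $\tau$ be the hitting time of $\mathbf{0}$. Because $\mathbf{0}$ is absorbing, $\{\tau\le 4k\}=\{\zeta^{4k}=\mathbf{0}\}$. We claim that
\[
\mathbb{P}_x(\tau>4k)\le(1-\varepsilon)^k \qquad \text{for all } x\in X_n,\ k\ge 0 .
\]
We prove this by induction on $k$ using the Markov property at time $4(k-1)$:
\[
\mathbb{P}_x(\tau>4k)=\sum_{y\neq\mathbf{0}}\mathbb{P}_x\bigl(\zeta^{4(k-1)}=y,\ \tau>4(k-1)\bigr)\,\mathbb{P}_y(\zeta^4\neq\mathbf{0}) .
\]
Each factor $\mathbb{P}_y(\zeta^4\neq\mathbf{0})$ is at most $1-\varepsilon$ by the uniform bound, and the remaining sum is $\mathbb{P}_x(\tau>4(k-1))\le(1-\varepsilon)^{k-1}$ by the induction hypothesis.

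Letting $k\to\infty$ gives $\mathbb{P}_x(\tau=\infty)=0$. As a bonus, $\mathbb{E}_x[\tau]\le 4/\varepsilon<\infty$.
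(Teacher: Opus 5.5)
Your proof is correct and follows essentially the same route as the paper: a uniform lower bound $\varepsilon>0$ on $\mathbb{P}_x(\zeta^4=\mathbf{0})$ from Theorem~\ref{thm:4step}, geometric decay of $\mathbb{P}_x(\zeta^{4k}\neq\mathbf{0})$ via the Markov property at multiples of $4$, and uniqueness by contradiction with Theorem~\ref{thm:4step}. Your version is slightly tighter: you verify explicitly that $\mathbf{0}$ is absorbing (the paper uses this without proof), you give the concrete bound $\varepsilon=2^{-4n}$, and you conclude directly from $\{\tau=\infty\}\subseteq\{\tau>4k\}$ rather than through the paper's $\limsup$ and union-bound argument.
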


\begin{proof}
By Theorem~\ref{thm:4step}, for every initial configuration $x\in X_n$ we have
\[
\mathbb P_x(\zeta^{4}=\mathbf 0)>0.
\]
Since the state space $X_n$ is finite, the minimum of these strictly positive numbers
is strictly positive; define
\[
\varepsilon \;:=\; \min_{x\in X_n}\mathbb P_x(\zeta^{4}=\mathbf 0)\;>\;0.
\]
For $k\ge 0$ set $A_k:=\{\zeta^{4k}\neq \mathbf 0\}$.
We claim that for every $x\in X_n$ and every $k\ge 0$:
\begin{equation}\label{eq:geom-rec}
\mathbb P_x(A_{k+1})\le (1-\varepsilon)\,\mathbb P_x(A_k).
\end{equation}
To prove \eqref{eq:geom-rec}, we use the law of total probability with respect to the
random variable $\zeta^{4k}$:
\begin{equation}\label{eq:total-prob}
\mathbb P_x(A_{k+1})
=
\sum_{y\in X_n}\mathbb P_x(A_{k+1}\mid \zeta^{4k}=y)\,\mathbb P_x(\zeta^{4k}=y).
\end{equation}
If $y=\mathbf 0$, then we have
$\mathbb P_x(A_{k+1}\mid \zeta^{4k}=\mathbf 0)=0$.
If $y\neq \mathbf 0$, then by the Markov property and time-homogeneity we have:
\[
\mathbb P_x(A_{k+1}\mid \zeta^{4k}=y)
=
\mathbb P_y(\zeta^{4}\neq \mathbf 0)
\le 1-\varepsilon.
\]
Thus
\begin{eqnarray}
\mathbb P_x(A_{k+1})
&=&
\sum_{y\neq \mathbf 0}\mathbb P_x(A_{k+1}\mid \zeta^{4k}=y)\,\mathbb P_x(\zeta^{4k}=y)\nonumber \\
&\le&
(1-\varepsilon)\sum_{y\neq \mathbf 0}\mathbb P_x(\zeta^{4k}=y). \nonumber
\end{eqnarray}
But $\sum_{y\neq \mathbf 0}\mathbb P_x(\zeta^{4k}=y)=\mathbb P_x(\zeta^{4k}\neq \mathbf 0)=\mathbb P_x(A_k)$,
so \eqref{eq:geom-rec} follows. Iterating \eqref{eq:geom-rec} yields, for all $k\ge 0$,
\begin{equation}\label{eq:geom-bound}
\mathbb P_x(A_k)\le (1-\varepsilon)^k.
\end{equation}
Consider now the event that the chain never hits $\mathbf 0$:
\[
\{\tau_{\mathbf 0}=\infty\}=\{\zeta^t\neq \mathbf 0\ \text{for all }t\ge 0\},
\]
where $\tau_{\mathbf 0}$ is the first hitting time of the configuration $\mathbf{0}$.
Since $\mathbf 0$ is absorbing, if the chain ever hits $\mathbf 0$ at some time $t$
then it stays there forever, in particular $\zeta^{4k}=\mathbf 0$ for all $k$ large enough.
Equivalently, the chain never hits $\mathbf 0$ if and only if
$\zeta^{4k}\neq \mathbf 0$ for infinitely many $k$, that is,
\begin{equation}\label{eq:tau-limsup}
\{\tau_{\mathbf 0}=\infty\}
=
\limsup_{k\to\infty}A_k
=
\bigcap_{m\ge 0}\ \bigcup_{k\ge m}A_k.
\end{equation}
Fix $m\ge 0$. From the inclusion
\[
\bigcap_{r\ge 0}\ \bigcup_{k\ge r}A_k \subseteq\bigcup_{k\ge m}A_k,
\]
monotonicity of probability gives
\[
\mathbb P_x(\tau_{\mathbf 0}=\infty)
=
\mathbb P_x\!\Big(\bigcap_{r\ge 0}\ \bigcup_{k\ge r}A_k\Big)
\le
\mathbb P_x\!\Big(\bigcup_{k\ge m}A_k\Big).
\]
Applying the union bound and \eqref{eq:geom-bound},
\[
\mathbb P_x\!\Big(\bigcup_{k\ge m}A_k\Big)
\le \sum_{k\ge m}\mathbb P_x(A_k)
\le \sum_{k\ge m}(1-\varepsilon)^k
= \frac{(1-\varepsilon)^m}{\varepsilon}.
\]
Since the right-hand side tends to $0$ as $m\to\infty$, we conclude that
\[
\mathbb P_x(\tau_{\mathbf 0}=\infty)=0,
\qquad\text{i.e.}\qquad
\mathbb P_x(\tau_{\mathbf 0}<\infty)=1.
\]
Thus the hitting time of $\mathbf 0$ is finite almost surely.

Assume by contradiction that there exists $y\in X_n$, $y\neq \mathbf 0$,
which is also absorbing. Then the chain started from $y$ remains at $y$ forever,
so $\mathbb P_y(\zeta^4=\mathbf 0)=0$. This contradicts Theorem~\ref{thm:4step}, which
implies $\mathbb P_y(\zeta^4=\mathbf 0)\ge \varepsilon>0$. Hence $\mathbf 0$
is the unique absorbing state.
\end{proof}

\section{Mean Field failure and block approximation inefficiency}
\label{s:failure}
Theorem~\ref{cor:aas} shows that for the diploid with ECAs 60, 102 with $\lambda=1/2$,  the all-zero configuration is the unique absorbing state of the PCA and is reached almost surely in finite time from any initial condition. Next, we want to move away from $\lambda=1/2$ and study the extended region with absorbing state. However, we will show that both the MF and the BA have difficulties detecting this absorbing stationary state and the transitions
to the finite-density state observed at the left and at the right of $\lambda=1/2$.

More precisely, if we consider now 
the MF approximation, it indeed fails completely in detecting the presence of the zero stationary state. In fact, by
recalling \eqref{mod010bis} and Table~\ref{tab:rules_60_102},
by writing \eqref{eq:it014} for the $1$-cell block $B_{1,1}$ and 
exploiting the approximation \eqref{appMF}, one gets 
\begin{align*}
\delta
=
\varrho(1)
=
&
\lambda(1-\delta)\delta^2
+
(1-\delta)\delta^2
+
(1-\lambda)(1-\delta)^2\delta
\\
&
+
(1-\lambda)(1-\delta)\delta^2
+
(1-\delta)^2\delta
+
\lambda(1-\delta)^2\delta
,
\end{align*}
which, besides the unstable solution $\delta=0$ (see Figure~\ref{f:fig-dens2}) admits the non-zero 
stable solution $\delta=1/2$ for every value of $\lambda\in(0,1)$. 
In other words, the MF approximation predicts the existence 
of both the zero density and the non-zero density solution 
for every $\lambda$, in contrast with the numerical results of Figure~\ref{f:fig-dens0}.

Furthermore, we also approach the problem with the BA and 
consider the cases $M=3,5,7,9,11,13$. 
As shown in 
Figure~\ref{f:fig-dens2}
the BA greatly improves the MF result, 
but, nevertheless, it is not able to capture the existence 
of the transition.

To be more precise, the measure concentrated on the zero block is
trivially a solution of the block approximation iterative equations, 
nevertheless, solving the iterative equations starting from the $1/2$ 
Bernoulli measure the fixed point found in the measure space 
has positive density for every $\lambda$, as reported in 
Figure~\ref{f:fig-dens2}. 
\begin{figure}
\centering
\includegraphics[width=0.47\textwidth,height=0.37\textwidth]{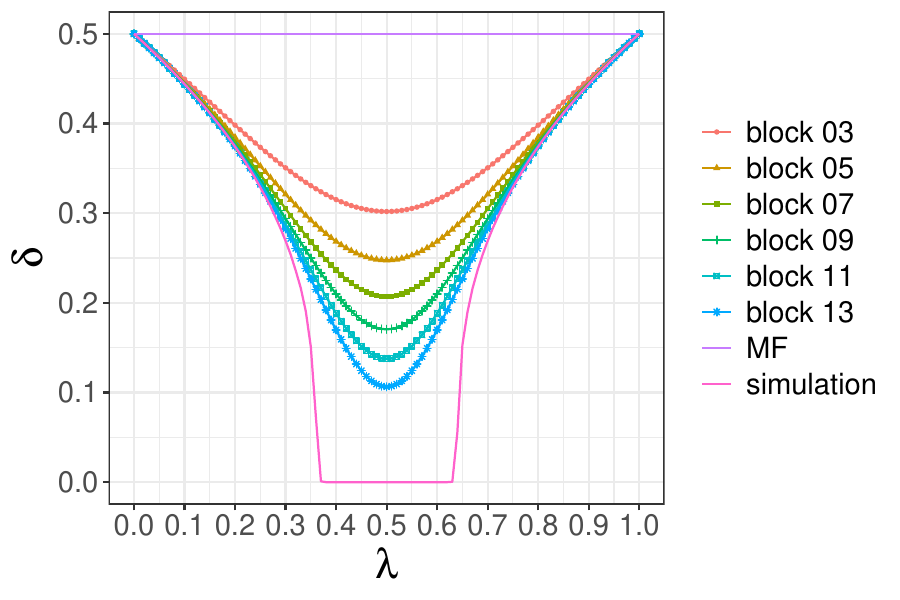}
\caption{Density as function of the mixing parameter $\lambda$ 
computed via MF, BA of different sizes, 
and 
Monte Carlo simulations.
For the BA we report the solution found 
initializing the iterative equations with the $1/2$ Bernoulli measure.
}
\label{f:fig-dens2}
\end{figure}
Thus, the BA method is 
not able to predict a transition and provide an estimate of the 
transition point. 
We have also explored, without success, 
different starting points for the iterative equations, looking for 
smart initial conditions which could, at least for specific values 
of $\lambda$, end up in the 
measure concentrated on the zero block so that 
the transition point could be estimated. 
Summarizing, the measure concentrated on the zero block 
is a stationary point of the iterative equations, but it 
appears as a highly unstable fixed point. As soon as the 
iterations are started from a slightly different point in the 
block measure space, the algorithm converges to the non-zero 
density solutions. 
This behavior is much different from what we observed in 
several cases, for instance for the diploid considered in 
\cite{CLS2024}, for which there exist values of $\lambda$ 
such that 
the measure concentrated on zero becomes a stable 
fixed point of the iterative equations 
providing a clear strategy to estimate the transition point. 
\begin{figure}
\centering
\includegraphics[width=0.4799\textwidth]{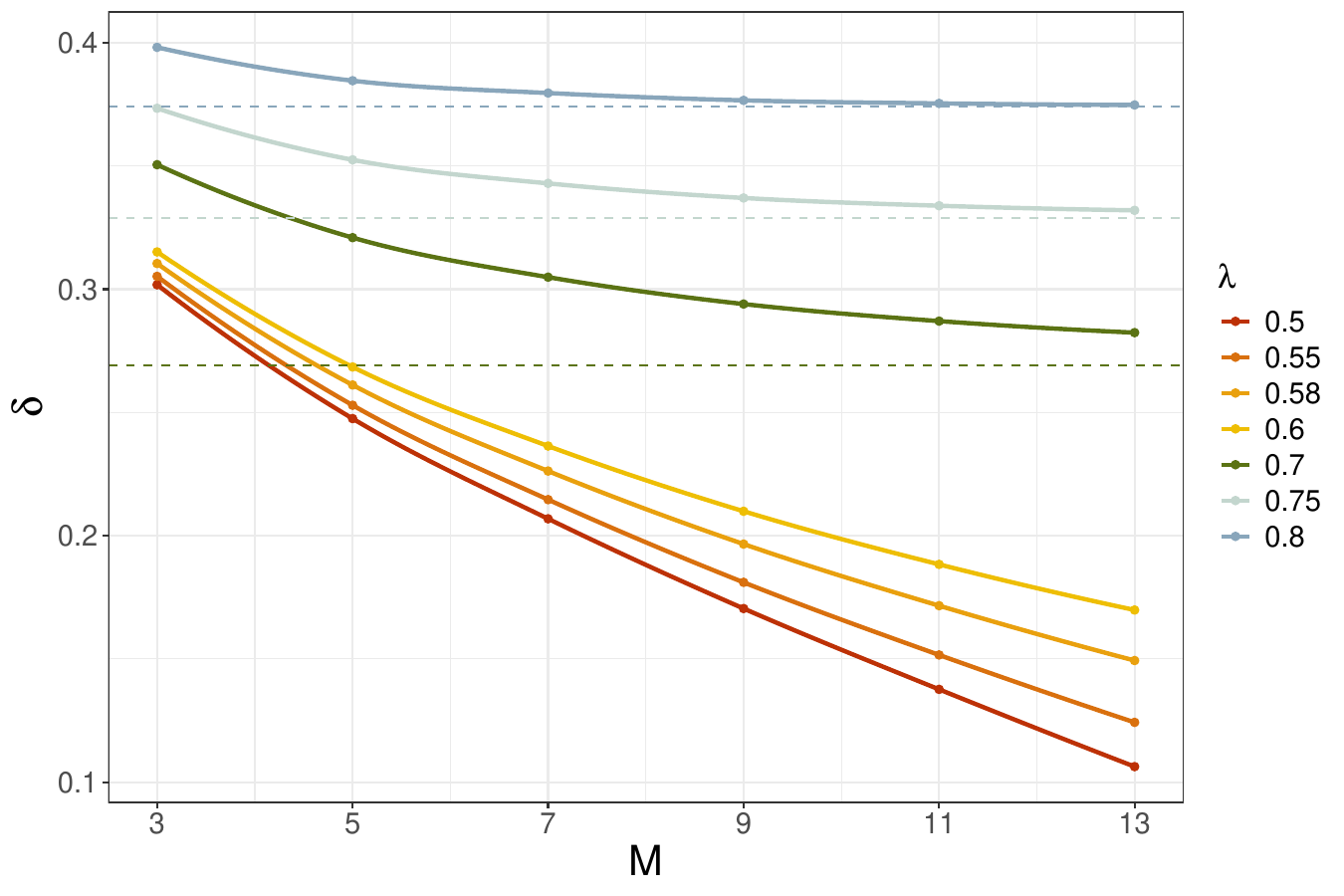}
\caption{Decay of the density as function of the size $M$ of the BA for different $\lambda$. The dashed lines denote the corresponding asymptotic densities of the Monte Carlo simulations. For large $\lambda$ the BA agrees with the simulations already at $M=13$, while it is off in the region of the zero-density absorbing state.}
\label{f:fig-blocksize}
\end{figure}

To further analyze this behavior, in  Figure~\ref{f:fig-blocksize} we show how the block
approximation $\rho_{M}(\lambda)$ depends on the block size
$M\in\{3,5,7,9,11,13\}$ for each fixed value of $\lambda$.  The sequence is
monotone decreasing and appears to converge as $M$ increases, suggesting that
the finite-block corrections follow an exponential relaxation.  Motivated by
this observation, we fit the six values $\rho_M(\lambda)$ with the
three–parameter form
\begin{equation}
   \rho^{(1)}_{M}(\lambda)
   = a(\lambda)\, e^{-\,b(\lambda)\, M} + c(\lambda),
   \label{eq:expfit_block}
\end{equation}
where $c(\lambda)$ represents the asymptotic BA that would be
obtained in the formal limit $M\to\infty$.  In this fit, $a(\lambda)$ measures
the amplitude of the finite–block corrections and $b(\lambda)$ the rate at
which these corrections decay.
For each $\lambda$, the parameters $(a(\lambda),b(\lambda),c(\lambda))$ were
determined by nonlinear least squares.
To ensure physical consistency of
the density, we imposed the natural positivity constraints
$a(\lambda)\ge 0$, $b(\lambda)\ge 0$ and $c(\lambda)\ge 0$, preventing the fit
from producing unphysical negative densities.  
The goodness of the fit was quantified using the coefficient of determination
\[
 R^2(\lambda)
 = 
 1 - 
 \frac{\sum_{M} 
      \left(\rho^{(1)}_M(\lambda)
      - a(\lambda)e^{-b(\lambda)M}-c(\lambda)\right)^2}
        {\sum_{M}
        \left(\rho^{(1)}_M(\lambda)-\overline{\rho^{(1)}(\lambda)}\right)^2},
\]
where $\overline{\rho^{(1)}(\lambda)}$ is the mean of the six values
$\rho^{(1)}_{M}(\lambda)$.  The constrained exponential fit achieves an excellent
description of the data: for all $\lambda\ge 0.01$ we find
\[
   0.93 \;\le\; R^2(\lambda) \;\le\; 0.9995,
\]
with $R^2(\lambda)>0.995$ throughout the region of interest around
$\lambda=1/2$.  The exponential law \eqref{eq:expfit_block} therefore provides
a quantitatively accurate interpolation of the finite-block densities and a
reliable extrapolation to $M\to\infty$.

The resulting asymptotic densities $c(\lambda)$ are shown in
Figure~\ref{fig:power_exp}, together with the Monte Carlo stationary densities
(black curve).   The agreement is excellent for
$\lambda$ also away from the  value $1/2$, confirming that the BA converges smoothly with $M$.  More importantly, in a narrow
region around $\lambda = 1/2$ the fitted asymptotic density becomes extremely
small and forms a plateau that is statistically compatible with
$c(\lambda)=0$.  This is entirely consistent with the exact absorbing behavior
of the PCA at $\lambda=1/2$: although the finite-block values
$\rho_M(1/2)$ remain positive for $M\le 13$, their exponential extrapolation
reveals that the infinite-block prediction is a vanishing density.
\begin{figure}
\centering
\includegraphics[width=0.45\textwidth,height=0.35\textwidth]{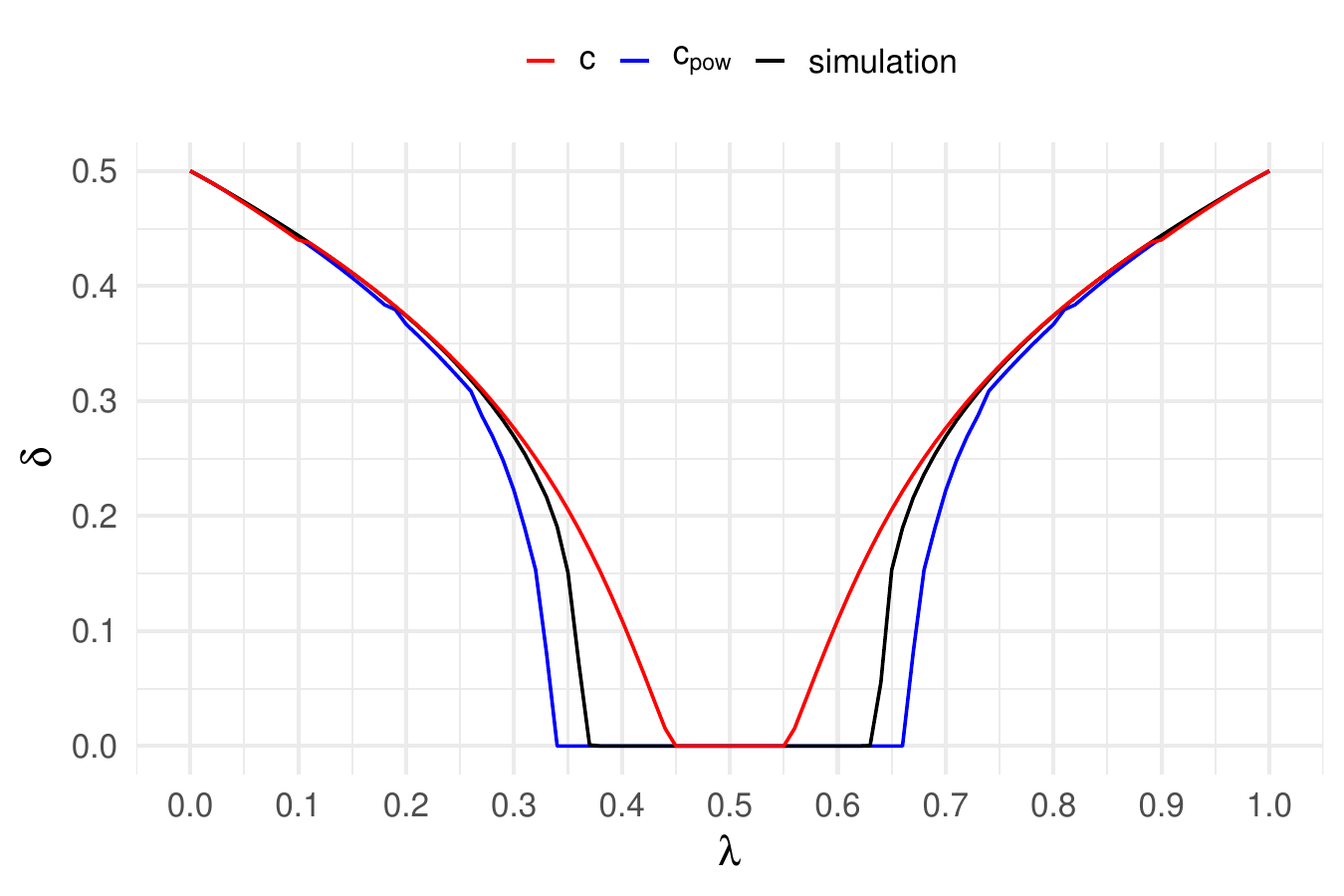}
\caption{Coefficients $c_{\textnormal{pow}}$ for the power-law model, $c$ for the exponential fit and the values of the Monte Carlo simulations }
\label{fig:power_exp}
\end{figure}
\begin{figure}
\centering
\includegraphics[width=0.45\textwidth]{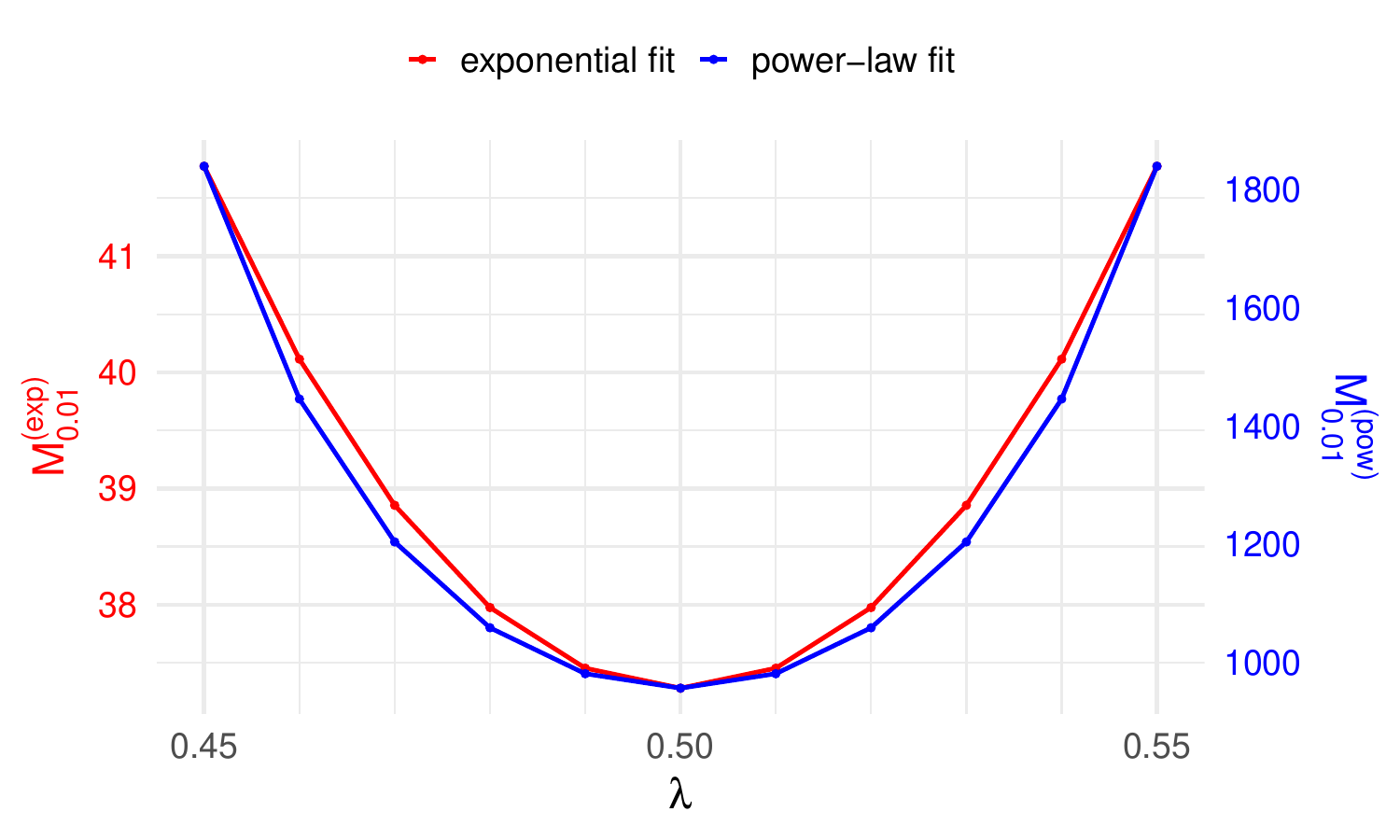}
\caption{Effective block size for getting a density as a function of $\lambda$, for the exponential and the power-law fits}
\label{fig:Meps_all}
\end{figure}

Figure~\ref{fig:Meps_all} shows, for each value of the mixing parameter
$\lambda$, the effective block size $M_{0.01}(\lambda)$ at which the
exponentially fitted BA predicts a density as small as
$\rho_{M}(\lambda)=0.01$.  The value of $M_{0.01}(\lambda)$ is obtained by
inverting the constrained exponential fit \eqref{eq:expfit_block}, 
namely
\[
   M_{0.01}(\lambda)
   = \frac{1}{b(\lambda)}\,
     \log\!\left(\frac{a(\lambda)}{0.01 - c(\lambda)}\right),
\]
for $c(\lambda)<0.01$.
This plot
provides a quantitative measure of how large a block must be before the
block-approximation predicts a practically vanishing density.  In particular,
near $\lambda=1/2$, the required block size is of the order of $M\approx 38$, much larger than our maximal block size $13$ in Figure~\ref{f:fig-dens2}. 

However, it is natural to ask whether describing the decay of the block density $\rho(M)$ as an exponential relaxation toward an asymptotic plateau is too restrictive to capture the behavior observed in the diploid dynamics. Hence, in order to include a power-law decay near the critical points, we complement the analysis based on a single-rate exponential
decay by considering a power-law parametrization of the block
density.  For each value of~$\lambda$ and each accessible block size
$M\in\{3,5,7,9,11,13\}$, we fit the functional form
\begin{equation}
\label{e:power}
\rho^{(2)}_M(\lambda)
  = c_{\mathrm{pow}}(\lambda) + a_{\mathrm{pow}}(\lambda)\, M^{-b_{\mathrm{pow}}(\lambda)},
\end{equation}
with $a_{\mathrm{pow}},\, b_{\mathrm{pow}},\, c_{\mathrm{pow}}\ge0$.
 The resulting asymptotic densities $c_\text{pow}$ are shown in
Figure~\ref{fig:power_exp}, together with the exponential fit and the  Monte Carlo stationary densities.  We observe that both extrapolation schemes predict the existence of a zero-density region, with the exponential fit \eqref{eq:expfit_block} underestimating and the power-law \eqref{e:power} overestimating its size.

%
%

Finally, we remark that the ability of the BA 
to detect transition points is restored if 
one considers a three \gls{eca} mixture (i.e., a \emph{triploid}) adding to the $60$ and 
$102$ rules the $0$ rule, as well. 
More precisely, one chooses the function $\phi$ as in \eqref{mod010}
with 
$f_1$ the zero rule, 
$f_2$ the rule 60, 
and 
$f_3$ the rule 102, 
with 
$\xi_1=\epsilon$,
$\xi_2=1-\lambda$,
and 
$\xi_3=\lambda-\epsilon$, 
so that for $\epsilon=0$ the original diploid is recovered.
Note that $0\leq\epsilon \leq \lambda \leq 1$.

The MF approximation, again, fails in identifying the 
transition, indeed 
the MF equation 
\begin{align*}
\delta
=
\varrho(1)
=
&
(\lambda-\epsilon)(1-\delta)\delta^2
+
(1-\epsilon)(1-\delta)\delta^2
\\
&
+
(1-\lambda)(1-\delta)^2\delta
+
(1-\lambda)(1-\delta)\delta^2
\\
&
+
(1-\epsilon)(1-\delta)^2\delta
+
(\lambda-\epsilon)(1-\delta)^2\delta
,
\end{align*}
has the solutions $\delta=0$
and
$\delta=(1-2\epsilon)/[2(1-\epsilon)]$ for every value of $\lambda$.
The second solution is positive for 
$\epsilon\in[0,1/2)$, equal to zero for $\epsilon=1/2$, and 
negative (that is to say, meaningless) for $\epsilon>1/2$. 
Thus, the MF approximation predicts, for every admissible $\lambda$,
both the zero density 
and a non-zero density solution for 
$\epsilon\in[0,1/2)$, while for $\epsilon\ge1/2$ the sole zero density measure is found. 
We can thus conclude that, given $\epsilon$, the MF approximation 
is not able to detect any transition. 

\begin{figure}
\centering
\includegraphics[width=0.48\textwidth,height=0.39\textwidth]{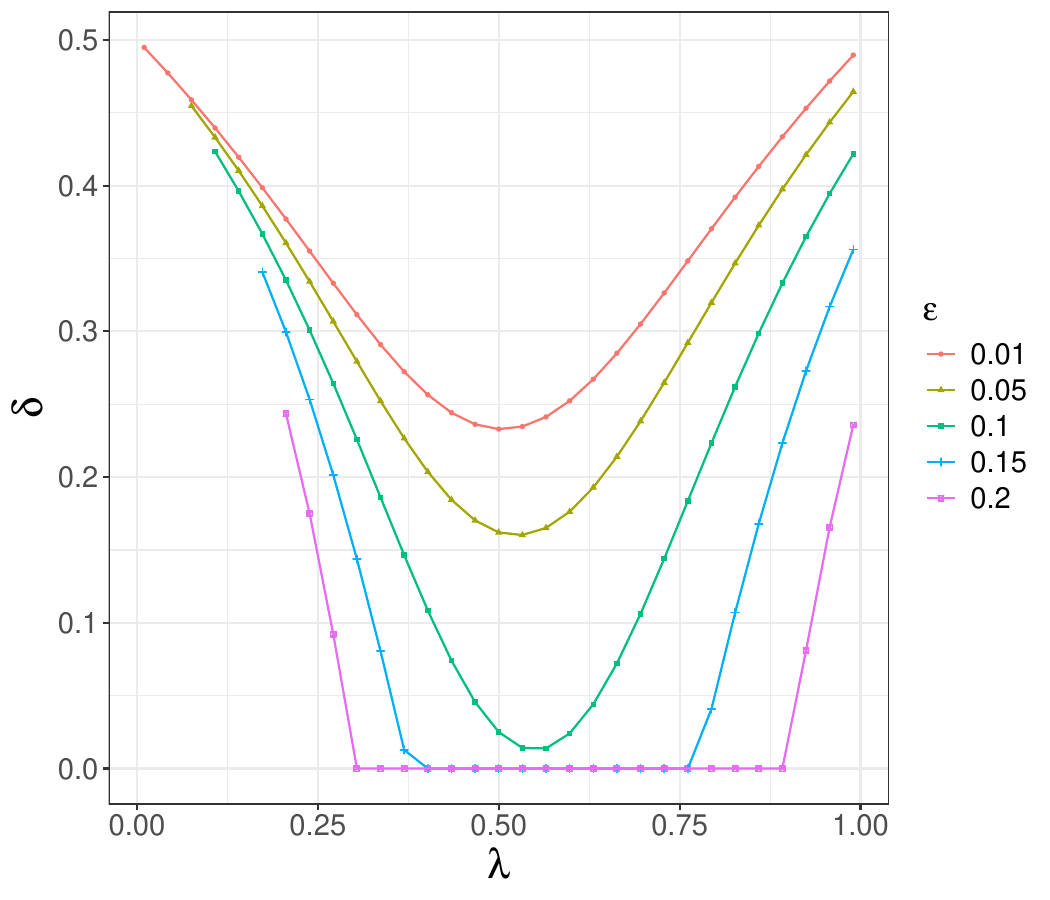}
\caption{Density for the 
three rule mixture as function
of $\lambda\in[\epsilon,1]$ 
computed via the
$5$-cell BA 
for $\epsilon=0.01$,
$\epsilon=0.05$,
$\epsilon=0.10$,
$\epsilon=0.15$,
$\epsilon=0.20$.
We report the solution found 
initializing the iterative equations with the $1/2$ Bernoulli measure.
The corresponding values predicted by the MF approximation are 
$0.495$, 
$0.474$, 
$0.444$, 
$0.412$, 
and
$0.375$.
}
\label{f:fig-dens3}
\end{figure}

As reported in Figure~\ref{f:fig-dens3}, the MF
prediction is largely improved by the BA.
If $\epsilon$ is large enough, two 
transitions between the zero and the non-zero density 
are detected.
Indeed, there exists an interval of values of $\lambda$ in which 
the zero density measure is found as the solution of the  iterative equations initialized with the Bernoulli measure, 
meaning that the associated fixed point in the measure space 
is stable. 

\section{Conclusions}
\label{s:future}

We have studied the dynamics of a probabilistic cellular automaton obtained as a
stochastic mixture of the additive ECA rules~60 and~102.
The model provides a nontrivial example of a probabilistic combination of two
mirror-symmetric linear automata, whose stochastic dynamics exhibit an absorbing
zero stationary state for a finite interval of values of the mixing parameter.
The physics interpretation would be that the competition between the right and left moving rules $60$ and $102$ leads to the stabilization of an intermediate zero-density phase.

A central result of this work is that standard block approximation schemes fail
systematically to reproduce this absorbing stationary state.
The failure can be traced to the additive and mirror symmetries of the underlying
deterministic rules, which impose global algebraic constraints on the evolution
of configurations.
Such constraints are not representable within a closure based on finitely many
local marginals, and the resulting block approximation equations preserve
spurious stationary solutions that do not correspond to invariant measures of
the exact stochastic dynamics.
This shows that, for probabilistic mixtures of linear cellular automata,
mean-field-type closures may yield qualitatively incorrect predictions even in
regimes where the exact finite system almost surely reaches an absorbing state.

The contrast with percolation-type PCA, such as the Domany--Kinzel model (\cite{domany1984}), is
instructive.
In those systems the update rule is monotone: activity can only be generated
locally from pre-existing activity, and once a configuration becomes locally
inactive it cannot regenerate activity at later times.
As a consequence, the long-time behavior is determined by local interactions,
and low-order block approximations are sufficient to capture the qualitative
structure of the extinction transition.

More generally, our results indicate that approximation schemes based on finite
local marginals may be unreliable for stochastic cellular automata whose dynamics
are governed by additive or other algebraic symmetries.
In such systems, global constraints on configuration space can dominate the
long-time behavior, leading to qualitative discrepancies between the exact
process and its finite-order closures.
Developing approximation methods capable of incorporating such constraints
remains an interesting direction for future work.

\begin{acknowledgments}
ENMC  
thanks GNFM and the PRIN 2022 project
``Mathematical Modelling of Heterogeneous Systems (MMHS)",
financed by the European Union - Next Generation EU,
CUP B53D23009360006, Project Code 2022MKB7MM, PNRR M4.C2.1.1.
\end{acknowledgments}



\bibliographystyle{unsrt}

\bibliography{cs-pca_misc}

\end{document}